\newtheorem{theorem}{Theorem}
\newtheorem{lemma}[theorem]{Lemma}
\newtheorem{claim}[theorem]{Claim}
\newtheorem{proposition}[theorem]{Proposition}
\def\RR{\mathbb R}
\def\QQ{\mathbb Q}
\def\ZZ{\mathbb Z}
\def\NN{\mathbb N}
\def\bbone{\mathbbm{1}}
\def\cF{\mathcal F}
\def\cH{\mathcal H}
\def\cL{\mathcal L}
\def\cP{\mathcal P}
\def\cT{\mathcal T}
\def\cQ{\mathcal Q}
\def\cR{\mathcal R}
\def\cH{\mathcal H}
\def\cV{\mathcal V}
\def\bdelta{\boldsymbol \delta}
\def\b1{\mathbf 1}
\def\eps{\varepsilon}
\def\sees{\sim}
\newcommand{\raf}[1]{(\ref{#1})}
\newcommand{\poly}{\operatorname{poly}}
\newcommand{\polylog}{\operatorname{polylog}}
\newcommand{\argmax}{\operatorname{argmax}}
\newcommand{\conv}{\operatorname{conv.hull}}
\newcommand{\area}{\operatorname{area}}
\newcommand{\prem}{\operatorname{prem}}
\newcommand{\vol}{\operatorname{vol}}
\newcommand{\cells}{\operatorname{cells}}
\newcommand{\bone}{\ensuremath{\boldsymbol{1}}}
\newcommand{\hide}[1]{}
\def \VCd {\text{VC-dim}}
\def \MO {\textsc{Max}}
\def \PO {\textsc{PointIn}}
\def \Sample {\textsc{Sample}}
\def \SO {\textsc{Subsys}}
\def \OPT {\textsc{Opt}}
\newtheorem{corollary}{Corollary}
\newtheorem{remark}{Remark}
\title{Finding Small Hitting Sets in Infinite Range Spaces of Bounded VC-dimension}
\author{
Khaled Elbassioni\thanks{Masdar Institute of Science and Technology, P.O. Box 54224, Abu Dhabi, UAE;
(kelbassioni@masdar.ac.ae)}
}
\begin{document}

\maketitle

\begin{abstract}
We consider the problem of finding a small hitting set in an {\it infinite} range space $\cF=(Q,\cR)$ of bounded VC-dimension. We show that, under reasonably general assumptions, the infinite dimensional convex relaxation can be solved (approximately) efficiently by multiplicative weight updates. As a consequence, we get an algorithm that finds, for any $\delta>0$, a set of size $O(s_{\cF}(z^*_\cF))$ that hits $(1-\delta)$-fraction of $\cR$ (with respect to a given measure) in time proportional to $\log(\frac{1}{\delta})$, where  $s_{\cF}(\frac{1}{\epsilon})$ is the size of the smallest $\epsilon$-net the range space admits, and $z^*_{\cF}$ is the value of the {\it fractional} optimal solution. 
This {\it exponentially} improves upon previous results which achieve the same approximation guarantees with running time proportional to $\poly(\frac{1}{\delta})$.
Our assumptions hold, for instance, in the case when the range space represents the {\it visibility} regions of a polygon in $\RR^2$, giving thus a deterministic  polynomial time $O(\log z^*_{\cF})$-approximation algorithm for guarding $(1-\delta)$-fraction of the area of any given simple polygon, with running time proportional to $\polylog(\frac{1}{\delta})$.
\end{abstract}
\section{Introduction}
Let $\cF=(Q,\cR)$ be a range space defined by a set of ranges $\cR\subseteq 2^Q$ over a (possibly) {\it infinite} set $Q$. A {\it hitting} set of $\cR$ is a subset $H\subseteq Q$ such that $H\cap R\ne\emptyset$ for all $R\in\cR$. Finding a hitting set of minimum size for a given range space is a fundamental problem in computational geometry. For finite range spaces (that is when $Q$ is finite), standard algorithms for \textsc{SetCover} \cite{J74,L75,C79} yield $(\log|Q|+1)$-approximation
in polynomial time, and this is essentially the best possible guarantee  assuming $\text{\it NP}\not\subset\text{\it DTIME}(n^{O(\log\log n)})$ \cite{F98,LY94}. Better approximation algorithms exist for special cases, such as range spaces of {\it bounded VC-dimension} \cite{BG95}, of {\it bounded union complexity}~\cite{ClarksonV06,V09}, of {\it bounded shallow cell complexity} \cite{CGKS12}, as well as several classes of geometric range spaces \cite{AES10,PR08,KK11}. Many of these results are based on showing the existence of a small-size {\it  $\epsilon$-net} for the range space $\cF$ and then using the multiplicative weight updates algorithm of  Br\"{o}nnimann and Goodrich \cite{BG95}. For instance, if a range space $\cF$ has VC-dimension $d$ then it admits an $\epsilon$-net of size $O(\frac{d}{\epsilon}\log\frac{1}{\epsilon})$ \cite{HW87,KPW92}, which by the above mentioned method implies an $O(d\cdot\log\OPT_{\cF})$-approximation algorithm for the hitting set problem for $\cF$, where $\OPT_{\cF}$ denotes the size of a minimum-size hitting set. Even et al.~\cite{ERS05} observed that this can be improved to $O(d\cdot\log z^*_{\cF})$-approximation by first solving the LP-relaxation of the problem to obtain the value of the {\it fractional} optimal solution\footnote{In fact, we will observe below (see Appendix~\ref{sec:BG}) that the exact same algorithm of \cite{BG95}, but with a slightly modified analysis, gives this improved bound of \cite{ERS05}, {\it without} the need to solve an LP.}  $z^*_{\cF}$, and then finding an $\epsilon$-net, with $\epsilon:=1/z^*_{\cF}$. 

The multiplicative weight updates algorithm in \cite{BG95} works by maintaining weights on the {\it points}. The straightforward extension to infinite (or continuous) range spaces (that is, the case when $Q$ is infinite) does not seem to work, since the bound on the number of iterations depends on the measure of the regions created during the course of the algorithm, which can be arbitrarily small (see Appendix~\ref{sec:BG} for details). In this paper we take a different approach, which can be thought of as a combination of the methods in \cite{BG95} and \cite{ERS05} (with LP replaced by an {\it infinite dimensional convex relaxation}):
\begin{itemize}
	\item We maintain weights on the { \it ranges} (in contrast to Br\"{o}nnimann and Goodrich \cite{BG95} which maintain weights on the points, and the second method suggested by Agarwal and Pan \cite{AP14} which maintains weights on both points and ranges);
	\item We first solve the covering convex relaxation within a factor of $1+\eps$ using multiplicative weight updates (MWU), extending the approach in \cite{GK98} to infinite dimensional covering LP's (under reasonable assumptions);
	\item We finally use the rounding idea of \cite{ERS05} to get a small integral hitting set from the obtained fractional solution.
\end{itemize}

\paragraph{Informal main theorem.} Given a range space $\cF=(Q,\cR)$ of VC-dimension $d$, (under mild assumptions) there is an algorithm that, for any $\delta>0$, finds a subset of $Q$ of size $O(d\cdot z_{\cF}^*\log z_{\cF}^*)$ that hits $(1-\delta)$-fraction of $\cR$ (with respect to a given measure) in time polynomial in the input description of $\cF$ and $\log(\frac{1}{\delta})$.  

\medskip

This {\it exponentially} improves upon previous results\footnote{More precisely (as pointed to us by an anonymous reviewer), using relative approximation results (see, e.g., \cite{PS11}), one can obtain the same approximation guarantees as our main Theorem by solving the problem on the set system induced on samples of size $O((d\cdot\OPT_{\cF} /\delta) \log (1/\delta))$ .} which achieve the same approximation guarantees, but with running time depending {\it polynomially} on $\frac{1}{\delta}$.  

\medskip

We apply this result to a number of problems:
\begin{itemize}
	\item The art gallery problem: given a simple polygon $H$, our main theorem implies that there is a deterministic  polytime $O(\log z^*_{\cF})$-approximation algorithm (with running time proportional to $\polylog(\frac{1}{\delta})$) for guarding $(1-\delta)$-fraction of the area of $H$. When $\delta$ is (exponentially) small, this improves upon a previous result \cite{CEH07} which gives a  polytime algorithm that finds a set of size $O(\OPT_\cF\cdot\log\frac{1}{\delta})$ hitting $(1-\delta)$-fraction of $\cR$. Other (randomized) $O(\log \OPT_{\cF})$-approximation results which provide full guarding (i.e. $\delta=0$) also exist, but they either run in pseudo-polynomial time \cite{DKDS07}, restrict the set of candidate guard locations \cite{EH06}, or make some general position assumptions \cite{BM16}.
	\item Covering a polygonal region by translates of a convex polygon: Given a collection of polygons in the plane $\cH$  and a convex polygon $H_0$, our main theorem implies that there is a randomized polytime $O(1)$-approximation algorithm for covering  $(1-\delta)$ of the total area of the polygons in $\cH$ by  the minimum number of translates of $H_0$. Previous results with proved approximation guarantees mostly consider only the case when $\cH$ is a set of points \cite{ClarksonV06,HM85,Laue08}.
	\item Polyhedral separation in fixed dimension: Given two convex polytopes $\cP_1,\cP_2\subseteq \RR^d$ such that $\cP_1\subset \cP_2$, our main theorem implies that there is a randomized polytime $O(d\cdot \log z^*_{\cF})$-approximation algorithm for finding a polytope $\cP_3$ with the minimum number of facets separating $\cP_1$ from $(1-\delta)$-fraction of the volume of $\partial\cP_2$. This improves the approximation ratio by a factor of $d$ over the previous (deterministic) result \cite{BG95} (but which gives a complete separation).  
\end{itemize}
More related work on these problems can be found in the corresponding subsections of Section~\ref{sec:app}.

The paper is organized as follows. In the next section we define our notation, recall some preliminaries, and describe the infinite dimensional convex relaxation. In Section~\ref{sec:main}, we state our main result, followed by the algorithm for solving the fractional problem in Section~\ref{sec:algorithm} and its analysis in Section~\ref{sec:analysis}. The success of the whole algorithm relies crucially on being able to efficiently implement the so-called {\it maximization oracle}, which essentially calls for finding, for a given measure on the ranges, a point that is contained in the heaviest subset of ranges (with respect to the given measure). We utilize the fact that the dual range space has bounded VC-dimension in section~\ref{sec:max} to give an efficient randomized implementation of the maximization oracle in the {\it Real RAM} model of computation. With more work, we show in fact that, in the case of the art gallery problem, the maximization oracle can be implemented in deterministic polynomial time in the {\it bit model}; this will be explained in Section~\ref{sec:gallery}. Sections~\ref{sec:cover-polygon} and~\ref{sec:poly-sep}  describe the two other applications.            

\section{Preliminaries}\label{sec:prelim} 
\subsection{Notation}
Let $\cF=(Q,\cR)$ be a range space. The {\it dual} range space $\cF^*=(Q^*,\cR^*)$ is defined as the range space with $Q^*:=\cR$ and $\cR^*:=\{\{R\in\cR:~q\in\cR\}:~q\in Q\}$. For a point $q\in Q$ and a subset of ranges $\cR'\subseteq\cR$, let $\cR'[q]:=\{R\in\cR':~q\in R\}$. 
For a set of points $P\subseteq Q$, let $\cR|_P:=\{R\cap P~:~R\in\cR\}$ be the {\it projection} of $\cR$ onto $P$. Similarly, for a set of ranges $\cR'\subseteq\cR$, let $\cQ_{\cR'}:=\{\cR'[q]:~q\in Q\}$. 
For a finite set $P\subseteq Q$ of size $r$, we denote by $g_{\cF}(r)\le 2^r$ the smallest integer such that $|\cR_P|\le g_{\cF}(r)$.
For $p\in Q$ and $R\in\cR$, we denote by $\bbone_{p\in R}\in\{0,1\}$ the indicator variable that takes value $1$ if and only if $p\in R$.
 
\subsection{Problem definition and assumptions}
More formally, we consider the following problem: 
\begin{description}
	\item \textsc{Min-Hitting-Set}: Given a range space $\cF=(Q,\cR)$, find a minimum-size hitting set.
\end{description}

We shall make the following assumptions\footnote{For simplicity of presentation, we will make the implicit assumption in this paper that both $Q$ and $\cR$ is are in one-to-one correspondence with some subsets of $\RR^k$, as all the applications we consider have this restriction. 
	This implies that the measure $w_0$ in (A3) (and $\mu_0$ in (A3$'$)) can be taken as the standard volume measures in $\RR^k$, and the integrals used below are the standard Riemann integrals. However, we note that the extension to general measurable sets should be straightforward.}:  
\begin{itemize}
	\item[(A1)] $g_\cF(r)\le r^\gamma$, for some non-decreasing function $g:\NN\to\RR_+$, and some constant $\gamma\ge 1$.
	 \item[(A1$'$)] The range space is given by a {\it subsystem oracle} \SO$(\cF,P)$ that, given any finite $P\subseteq Q$, returns the set of ranges $\cR|_P$. 
    \item[(A2)] There exists a finite integral optimum whose value $\OPT_{\cF}$ is bounded by a parameter $n$ (that is not necessarily part of the input).
	
	\item[(A3)] There exists a finite measure $w_0:\cR\to\RR_+$ such that all subsets of $\cR$ are $w_0$-measurable. 
\end{itemize}                                                                                                                                                                  
\subsection{Range spaces of bounded VC-dimension}
We consider range spaces of bounded {\it VC-dimension} defined as follows. 
A finite set $P\subseteq Q$ is said to be {\it shattered} by $\cF$ if $\cR|_P=2^P$. The VC-dimension of $\cF$, denoted $\VCd(\cF)$, is the cardinality of the largest subset of $Q$ shattered by $\cF$. If arbitrarily large subsets of $Q$ can be shattered then the $\VCd(\cF)=+\infty$. It is well-known that if $\VCd(\cF)=d$ then $g_\cF(r)\le O(r^d)$. More precisely, the following bound holds.

\begin{lemma}[Sauer–-Shelah Lemma \cite{Sa72,Sh72}]\label{VC}
	For any range space $\cF=(Q,\cR)$ of VC-dimension $d$ and any $r\ge 1$, it holds that $g_{\cF}(r)\le g(r,d):=\sum_{i=0}^{d}\binom{r}{i}$.
\end{lemma}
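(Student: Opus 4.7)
The plan is to follow the standard double induction on $r+d$, with the main combinatorial step being a careful decomposition of the trace system $\cR|_P$ according to a distinguished point $p \in P$.

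First I would handle the base cases. When $d=0$, no single point is shattered, which forces $|\cR|_P| \le 1 = g(r,0)$ for every finite $P\subseteq Q$. When $r=0$, the only possible trace is $\emptyset$, so $|\cR|_P|\le 1 = g(0,d)$. The inequality $g(r,d)\ge g_{\cF}(r)$ in these cases is immediate.

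For the inductive step, fix $P\subseteq Q$ with $|P|=r\ge 1$, pick an arbitrary $p\in P$, set $P':=P\setminus\{p\}$, and write $\cA:=\cR|_P$. The key move is to split $\cA$ into two trace systems on $P'$:
\begin{align*}
\cA_1 &:= \{A\setminus\{p\}:~A\in\cA\},\\
\cA_2 &:= \{A\in\cA:~ p\notin A\text{ and } A\cup\{p\}\in\cA\}.
\end{align*}
A simple double counting (each $A\in\cA_1$ is identified with one or two sets in $\cA$, and the ``two'' case is exactly recorded by $\cA_2$) gives $|\cA|=|\cA_1|+|\cA_2|$.

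The heart of the argument is to bound the VC-dimensions of these two systems, viewed as set systems on the $(r-1)$-element ground set $P'$. The system $\cA_1$ inherits VC-dimension at most $d$, since any subset of $P'$ shattered by $\cA_1$ is already shattered by $\cR$ itself. For $\cA_2$, I would argue that if $S\subseteq P'$ is shattered by $\cA_2$, then $S\cup\{p\}$ is shattered by $\cR$: for every $T\subseteq S$ there exists $A\in\cA_2$ with $A\cap S=T$, and the defining property of $\cA_2$ provides both $A$ and $A\cup\{p\}$ in $\cA$, realizing the intersections $T$ and $T\cup\{p\}$ with $S\cup\{p\}$. Hence $|S|\le d-1$, so $\VCd(\cA_2)\le d-1$. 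Applying the inductive hypothesis to each system gives $|\cA_1|\le g(r-1,d)$ and $|\cA_2|\le g(r-1,d-1)$, and Pascal's identity $\binom{r-1}{i}+\binom{r-1}{i-1}=\binom{r}{i}$ (summed over $i=0,\dots,d$) yields $g(r-1,d)+g(r-1,d-1)=g(r,d)$, completing the induction.

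The only delicate point is the VC-dimension bound on $\cA_2$; everything else is bookkeeping and Pascal. Since the statement is about $g_{\cF}(r)=\max_{|P|=r}|\cR|_P|$, taking the maximum over $P$ at the end delivers the claimed inequality.
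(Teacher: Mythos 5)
Your proof is correct: the decomposition $|\cA|=|\cA_1|+|\cA_2|$, the VC-dimension bounds $\VCd(\cA_1)\le d$ and $\VCd(\cA_2)\le d-1$, and the conclusion via Pascal's identity are all sound (the only implicit step is the monotonicity $g(r-1,d')\le g(r-1,d)$ for $d'\le d$, which is immediate). Note that the paper does not prove this lemma at all — it is quoted as the classical Sauer--Shelah result with citations — and what you have written is precisely the standard double-induction argument from that literature, so there is no divergence to report.
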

\begin{lemma}
	\label{VC2}
	If $\VCd(\cF)=d$ then $\VCd(\cF^*)<2^{d+1}$.
\end{lemma}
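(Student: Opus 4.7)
The plan is to prove the contrapositive/contradiction: if $\VCd(\cF^*)\ge 2^{d+1}$, then we can exhibit $d+1$ points of $Q$ shattered by $\cR$, contradicting $\VCd(\cF)=d$. So I would begin by assuming there is a set of $m:=2^{d+1}$ ranges $R_1,\ldots,R_m\in\cR$ that is shattered by the dual range space $\cF^*=(Q^*,\cR^*)$. Unfolding the definition of $\cR^*$, this means that for every subset $T\subseteq[m]$ there exists a point $q_T\in Q$ with $q_T\in R_i \Longleftrightarrow i\in T$.

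Next, I would use the fact that $m=2^{d+1}$ to encode $[m]$ by binary strings in $\{0,1\}^{d+1}$. For each coordinate $j\in\{1,\ldots,d+1\}$, define
\[
T_j \;:=\; \{\, i\in[m] \;:\; \text{the $j$-th bit of $i$ equals } 1\,\},
\]
and consider the $d+1$ witness points $q_{T_1},\ldots,q_{T_{d+1}}\in Q$ supplied by the shattering assumption. I would then claim that these $d+1$ points are shattered by the subfamily $\{R_1,\ldots,R_m\}\subseteq\cR$, and therefore by $\cR$.

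To verify the claim, fix any pattern $\sigma\in\{0,1\}^{d+1}$ and let $i(\sigma)\in[m]$ be the element whose binary representation is $\sigma$. By the defining property of $q_T$,
\[
\bbone_{q_{T_j}\in R_{i(\sigma)}} \;=\; \bbone_{i(\sigma)\in T_j} \;=\; \bbone_{\text{$j$-th bit of }i(\sigma)\text{ equals }1} \;=\; \sigma_j,
\]
so $R_{i(\sigma)}$ realizes the desired pattern on $\{q_{T_1},\ldots,q_{T_{d+1}}\}$. As $\sigma$ ranges over all of $\{0,1\}^{d+1}$, all $2^{d+1}$ subsets are realized, so $\cR$ shatters $\{q_{T_1},\ldots,q_{T_{d+1}}\}$, contradicting $\VCd(\cF)=d$.

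I do not anticipate a serious obstacle: the argument is a double-counting/encoding trick, and the only thing to check is that the chosen points $q_{T_j}$ are well-defined (which is immediate from shattering in $\cF^*$) and that the $d+1$ coordinate subsets $T_1,\ldots,T_{d+1}$ realize every Venn-diagram cell of $[m]$, which they do by construction since the cell indexed by $\sigma$ contains the element $i(\sigma)$. The bound $2^{d+1}$ is therefore tight for this encoding, matching the statement $\VCd(\cF^*)<2^{d+1}$.
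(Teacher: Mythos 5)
Your argument is correct: it is the standard (Assouad-style) proof that the dual shatter bound $\VCd(\cF^*)<2^{d+1}$ follows from $\VCd(\cF)=d$, via the binary-encoding trick that turns a dually shattered family of $2^{d+1}$ ranges into $d+1$ primally shattered points (the realized patterns also force the witness points $q_{T_1},\ldots,q_{T_{d+1}}$ to be distinct, so no extra check is needed). The paper states Lemma~\ref{VC2} without proof, treating it as a known fact, and your write-up supplies exactly the classical argument it relies on; the only cosmetic point is to fix a bijection between the $2^{d+1}$ ranges and $\{0,1\}^{d+1}$ explicitly rather than speaking of ``the $j$-th bit of $i$'' for $i\in[m]$.
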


%
%
\subsection{$\epsilon$-nets}
Given a range space $(Q,\cR)$, a finite measure $\mu:Q\to\RR_+$ (such that the ranges in $\cR$ are $\mu$-measurable), and 
a parameter $\epsilon>0$, an {\it $\epsilon$-net} for $\cR$ (w.r.t. $\mu$) is a set $P\subseteq Q$ such that $P\cap R\neq\emptyset$ for all $R\in\cR$ that satisfy $\mu(R)\ge\epsilon\cdot\mu(Q)$. 
We say that a range space $\cF$ admits an $\epsilon$-net of size $s_{\cF}(\cdot)$, if for any $\epsilon>0$, there an $\epsilon$-net of size $s_{\cF}(\frac{1}{\epsilon})$. For range spaces of VC-dimension $d$, Haussler and Welzl \cite{HW87} proved a bound of $s_{\cF}(\frac{1}{\epsilon}):=O(\frac{d}{\epsilon}\log\frac{d}{\epsilon})$ on the size of an $\epsilon$-net, which was later slightly improved by Koml\'{o}s et al. 
 
\begin{theorem}[$\epsilon$-net Theorem \cite{HW87,KPW92}]\label{rand-net}
Let $\cF=(Q,\cR)$ be a range space of VC-dimension $d$, $\mu$ be an arbitrary probability measure on $Q$ (such that the ranges in $\cR$ are $\mu$-measurable), and $\epsilon>0$ be a given parameter. Then there exists an $\epsilon$-net of size $s_{\cF}(\frac{1}{\epsilon})=O(\frac{d}{\epsilon}\log\frac{1}{\epsilon})$. In fact, a random sample (w.r.t. to the probability measure $\mu$) of size $s_{\cF}(\frac{1}{\epsilon})$ is an $\epsilon$-net with (high) probability $\Omega(1)$. 
\end{theorem}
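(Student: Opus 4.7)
The plan is to use the classical \emph{double sampling} argument of Haussler and Welzl, combined with the Sauer--Shelah bound (Lemma~\ref{VC}) to control the combinatorial complexity. Fix $m := c\cdot\frac{d}{\epsilon}\log\frac{1}{\epsilon}$ for a sufficiently large constant $c$ and draw two independent i.i.d.\ samples $N,T \subseteq Q$, each of size $m$, from the probability measure $\mu$. I want to show that $\Pr[N \text{ is not an $\epsilon$-net}] < 1/2$. Define the bad events
\begin{align*}
E_1 &:= \{\exists R\in\cR:\ \mu(R)\ge\epsilon \text{ and } R\cap N=\emptyset\},\\
E_2 &:= \{\exists R\in\cR:\ \mu(R)\ge\epsilon,\ R\cap N=\emptyset,\text{ and } |R\cap T|\ge \tfrac{\epsilon m}{2}\}.
\end{align*}

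First I would show $\Pr[E_1]\le 2\Pr[E_2]$. Conditionally on $E_1$, fix any witnessing range $R$ (measurable selection). Since $T$ is independent of $N$, the random variable $|R\cap T|$ is binomial with expectation $\mu(R)\cdot m \ge \epsilon m$, and a simple Chebyshev (or Chernoff) estimate shows $\Pr[|R\cap T|\ge \tfrac{\epsilon m}{2}\mid E_1]\ge \tfrac{1}{2}$ for our choice of $m$, yielding the desired inequality.

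Next I would bound $\Pr[E_2]$ by the symmetrization/permutation trick. Think of drawing a single sample $S$ of size $2m$ from $\mu$ and then partitioning it into $(N,T)$ uniformly at random among all $\binom{2m}{m}$ equal splits; by i.i.d.\ symmetry this produces the same joint distribution as the two independent samples. Condition on $S$. By Sauer--Shelah (Lemma~\ref{VC}), the projection $\cR|_S$ has at most $g(2m,d)=O((2m)^d)$ distinct ranges, so it suffices to union-bound over representatives. For a fixed $R\in\cR|_S$ with $k:=|R\cap S|\ge \tfrac{\epsilon m}{2}$, the probability that the random split sends all $k$ of these points into $T$ (so that $R\cap N=\emptyset$) equals $\binom{2m-k}{m}/\binom{2m}{m} \le 2^{-k} \le 2^{-\epsilon m/2}$. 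Summing,
\[
\Pr[E_2\mid S]\ \le\ g(2m,d)\cdot 2^{-\epsilon m/2},
\]
and integrating over $S$ gives the same unconditional bound.

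Finally, plugging in $m=c\cdot\frac{d}{\epsilon}\log\frac{1}{\epsilon}$, the factor $(2m)^d$ is polynomial in $\frac{d}{\epsilon}$ while $2^{-\epsilon m/2}$ decays as $(1/\epsilon)^{-cd/2}$; choosing $c$ large enough makes the product less than $1/4$, hence $\Pr[E_1]<1/2$. In particular, a random sample of size $m$ is an $\epsilon$-net with probability $\Omega(1)$, which simultaneously gives the existence of an $\epsilon$-net of size $O(\frac{d}{\epsilon}\log\frac{1}{\epsilon})$. The main delicate point is the symmetrization step, which implicitly requires a measurable selection of the witness range $R$ so that the conditional Chebyshev bound makes sense; this can be handled either by first projecting onto $\cR|_{N\cup T}$ (reducing to finitely many ranges, where no measurability issue arises) or by invoking standard measurable selection theorems under the mild measurability hypothesis built into the theorem.
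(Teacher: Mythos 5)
The paper does not actually prove Theorem~\ref{rand-net}: it is imported verbatim from \cite{HW87,KPW92} as a black box, so there is no internal proof to compare against. Your double-sampling argument is the standard Haussler--Welzl proof, and its skeleton is sound: the two bad events, the conditional Chebyshev step giving $\Pr[E_1]\le 2\Pr[E_2]$ (your choice of $m$ makes $\epsilon m$ large enough, and the witness range depends only on $N$, so independence of $T$ is correctly used), the exchangeability/splitting trick, the $2^{-k}$ bound $\binom{2m-k}{m}/\binom{2m}{m}\le 2^{-k}$, and the union bound over traces; the measurability caveat is handled appropriately by projecting onto $\cR|_{N\cup T}$.

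The one genuine soft spot is the final calibration. If you bound the number of traces by $O((2m)^d)$, the union bound gives $(2m)^d\,2^{-\epsilon m/2}$, and with $m=c\frac{d}{\epsilon}\log\frac{1}{\epsilon}$ you would need $\frac{c}{2}\,d\log\frac{1}{\epsilon}$ to dominate $d\log(2m)=d\bigl(\log d+\log\frac{1}{\epsilon}+\log\log\frac{1}{\epsilon}+O(\log c)\bigr)$; the $d\log d$ term is not absorbed by any absolute constant $c$, so for fixed $\epsilon$ and large $d$ the product is not below $1/4$. As written, the computation only delivers the weaker Haussler--Welzl bound $O(\frac{d}{\epsilon}\log\frac{d}{\epsilon})$ (i.e.\ $m=\Theta(\frac{d}{\epsilon}\log\frac{d}{\epsilon})$), not the stated $O(\frac{d}{\epsilon}\log\frac{1}{\epsilon})$, and the phrase ``$(2m)^d$ is polynomial in $\frac{d}{\epsilon}$'' implicitly treats $d$ as a constant (or lets $c$ depend on $d$, which would spoil the claimed sample size). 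The fix is cheap: use the standard consequence of Lemma~\ref{VC}, $g(2m,d)\le(2em/d)^d$, so the $d$-th root of the union bound becomes $\frac{2em}{d}\,2^{-\epsilon m/(2d)}=O\bigl(\tfrac{c}{\epsilon}\log\tfrac{1}{\epsilon}\bigr)\cdot\epsilon^{\Omega(c)}$, which is below $1/2$ for an absolute constant $c$ whenever, say, $\epsilon\le 1/2$ (the regime $\epsilon$ near $1$ must be excluded anyway, since there the stated size $O(\frac{d}{\epsilon}\log\frac{1}{\epsilon})$ degenerates and should be read with a $+1$ in the logarithm). With that substitution your proof is complete and gives the theorem as stated up to the constant; the asymptotically tight constant of \cite{KPW92} requires their finer analysis, but the $O(\cdot)$ statement does not need it.
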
 

We say that a finite measure $\mu:Q\to\RR_+$ has (finite) support $K$ if $\mu$ can be written as a conic combination of $K$ {\it Dirac measures}\footnote{The Dirac measure satisfies $\int_{Q'}\bdelta_p(q)dq=1$ if $p\in Q'$, and $\int_{Q'}\bdelta_p(q)dq=0$ otherwise.}: $\mu=\sum_{p\in P}\mu(p)\bdelta_p(q)$, for some finite $P\subseteq Q$ of cardinality $K$ and non-negative multipliers $\mu(p)$, for $p\in P$. Measures of finite support can be considered as weights on a finite subset of $Q$, in which case an $\epsilon$-net can be computed deterministically as given by the following result of Matou\v{s}ek \cite{M91}.

\begin{theorem}[\cite{BCM99,CM96,M91}]\label{det-net}
Let $\cF=(Q,\cR)$ be a range space of VC-dimension $d$ satisfying (A1$'$), $\mu$ be a measure on $Q$ with support $K$, and $\epsilon>0$ be a given a parameter. Then for any $\epsilon>0$, there is a deterministic algorithm that computes an $\epsilon$-net for $\cR$ of size $s_{\cF}(\frac{1}{\epsilon})=O(\frac{d}{\epsilon}\log\frac{d}{\epsilon})$ in time $O(d)^{3d}\frac{1}{\epsilon^{2d}}\log^d(\frac{d}{\epsilon})K$. 	
\end{theorem}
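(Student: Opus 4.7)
The plan is to reduce to the standard finite-weighted setting and then apply the classical derandomization chain of Matou\v{s}ek, which factors the construction through an $\epsilon$-approximation. Since $\mu$ has finite support, write $\mu=\sum_{p\in P}\mu(p)\bdelta_p$ with $|P|=K$, so that it suffices to produce an $\epsilon$-net for the finite weighted range space $(P,\cR|_P,\mu)$. The first main step is to deterministically construct a small $(\epsilon/2)$-\emph{approximation} $A\subseteq P$, that is, a multiset whose normalized counting measure $\nu_A$ satisfies $|\nu_A(R)-\mu(R)/\mu(P)|\le \epsilon/2$ for every $R\in\cR$. Any $(\epsilon/2)$-net for the uniform range space $(A,\cR|_A)$ is then, by a standard union-of-events argument, an $\epsilon$-net for the original weighted instance.

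To build $A$ deterministically in the claimed time, I would use the merge-and-reduce scheme of Matou\v{s}ek based on iterated halving: partition $P$ into blocks of a controlled size, repeatedly pair up two current approximations, and invoke the subsystem oracle \SO$(\cF,\cdot)$ to enumerate the at most $O(r^d)$ traces on a set of size $r$ guaranteed by Lemma~\ref{VC}; on each merged block one finds a balanced $\pm1$ coloring whose discrepancy over every trace is $O(\sqrt{r\log(r/\epsilon)})$, and discards the lighter color to halve the size while losing only a small amount of accuracy. The key quantitative point is that each such halving step on a block of size $r\approx (d/\epsilon^2)\log(d/\epsilon)$ costs the enumeration of $\cR|_{\text{block}}$, which is at most $g(r,d)=O(r^d)$ ranges; combined with an exhaustive search (or a conditional-probabilities search) over colorings restricted to the trace equivalence classes, this yields a per-block cost of $O(d)^{3d}(1/\epsilon)^{2d}\log^d(d/\epsilon)$, and the overall cost is linear in $K$ by the geometric-series recurrence of merge-and-reduce. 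The output has size $O((d/\epsilon^2)\log(d/\epsilon))$, which is the standard bound for deterministic $\epsilon$-approximations under a subsystem oracle.

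Given the approximation $A$, the second step is to extract an $(\epsilon/2)$-net $N\subseteq A$ for $\cR|_A$ of the target size $O((d/\epsilon)\log(d/\epsilon))$. Since $\cR|_A$ is now completely enumerable in time $|A|^{O(d)}$ through \SO$(\cF,A)$, one can apply the deterministic reweighting construction of \cite{BCM99,CM96,M91}: start with uniform weights on $\cR|_A$, at each stage pick a sample of size $O((d/\epsilon)\log(d/\epsilon))$ that hits every heavy range (using the method of conditional probabilities driven by the Haussler--Welzl tail bound, which is applicable because the shatter function on $A$ is polynomial of degree $d$), and if some heavy range remains unhit, double its weight and repeat. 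Standard potential-function arguments bound the number of reweighting rounds by $O(\log|A|)$, and each round costs polynomial in $|A|$, which fits inside the claimed running time. Finally, assemble: $N$ is an $(\epsilon/2)$-net of $\cR|_A$, and because $A$ is an $(\epsilon/2)$-approximation, any $R$ with $\mu(R)\ge\epsilon\cdot\mu(P)$ has $\nu_A(R)\ge\epsilon/2$ and hence is hit by $N$, proving that $N$ is an $\epsilon$-net of the desired size.

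The main obstacle is verifying that the merge-and-reduce discrepancy coloring, when implemented through the subsystem oracle rather than through direct geometric primitives, really achieves the stated $O(d)^{3d}(1/\epsilon)^{2d}\log^d(d/\epsilon)K$ bound: one must track that the block size can be kept essentially $(d/\epsilon^2)\log(d/\epsilon)$, that the enumeration cost of $\cR|_{\text{block}}$ dominates the conditional-probabilities coloring search on each trace class, and that the recursion unrolls to a factor linear in $K$. The $\epsilon$-net step is comparatively routine once $A$ is in hand, since it operates on a set whose size no longer depends on $K$.
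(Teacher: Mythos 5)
The paper does not prove Theorem~\ref{det-net} at all: it is imported as a black box from \cite{BCM99,CM96,M91}, with only the weighted-to-unweighted reduction sketched immediately after the statement. Your proposal is essentially a faithful reconstruction of the standard argument in those references (reduce the finite-support measure to a multiset, build a deterministic $\epsilon/2$-approximation by merge-and-reduce halving with conditional-probabilities colorings through the subsystem oracle, then extract an $\epsilon/2$-net of the approximation), so it matches the intended proof route; the only part left informal is the bookkeeping that yields the exact $O(d)^{3d}\epsilon^{-2d}\log^d(d/\epsilon)K$ time bound, which you correctly identify and which is carried out in the cited works.
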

Since most of the results on $\epsilon$-nets are stated in terms of the unweighted case, it is worth recalling the reduction from  
the weighted case  to the unweighted case (see , e.g., \cite{M91}). Given a measure $\mu$ defined on a finite set $P$ of support $K$, we replace each point $p\in P$, by $\left\lfloor\frac{\mu(p)K}{\sum_{p\in P}\mu(p)}+1\right\rfloor$ copies of $p$. Let $Q'$ be the new set of points. Then $K':=|Q'|\le 2K$ and an $\frac{\epsilon}{2}$-net for $(Q',\cR|_{Q'})$ is an $\epsilon$-net for $(Q,\cR|_Q)$.

It should also be noted that some special range spaces may admit a smaller size $\epsilon$-net, e.g., $s_{\cF}(\frac{1}{\epsilon})=O(\frac{1}{\epsilon})$ for half-spaces in $\RR^3$ \cite{MSW90,M92}; see also \cite{CGKS12,KK11,KV06,V09}.


\subsection{$\epsilon$-approximations}
Given the dual range space $\cF^*$, a measure $w:\cR\to\RR_+$, and an $\epsilon>0$, an $\epsilon$-approximation is a finite subset of ranges $\cR'\subseteq\cR$ such that, for all $q\in Q$,   
\begin{equation}\label{eps-approx}
\left|\frac{|\cR'[q]|}{|\cR'|}-\frac{w(\cR[q])}{w(\cR)}\right|\le\epsilon;
\end{equation}
see, e.g., \cite{C00}. The following theorem (stated in the dual space for our purposes, where $\VCd(\cF^*)< 2^{d+1}$ by Lemma~\ref{VC2}) states the existence of an $\epsilon$-approximation of small size. 
\begin{theorem}[$\epsilon$-approximation Theorem \cite{AS08,C00,VC71}]\label{thm:eps-approx}
Let $\cF=(Q,\cR)$ be a range space of VC-dimension $d$, $w$ be an arbitrary probability measure on $\cR$, and $\epsilon>0$ be a given a parameter. Then a random sample (w.r.t. the probability measure $w$) of size $O(\frac{d2^{d}}{\epsilon^2}\log\frac{1}{\epsilon\sigma})$ is an $\epsilon$-approximation for $\cF^*$, with probability $1-\sigma$. 
\end{theorem}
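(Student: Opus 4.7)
The plan is to prove the theorem by the classical Vapnik--Chervonenkis symmetrization argument \cite{VC71}, applied in the dual range space $\cF^*$. By Lemma~\ref{VC2}, $d^* := \VCd(\cF^*) < 2^{d+1}$, so the Sauer--Shelah lemma (Lemma~\ref{VC}) gives the combinatorial ingredient $g_{\cF^*}(r)\le (er/d^*)^{d^*}$ for $r\ge d^*$, which will control the final union bound.

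First I would draw an i.i.d.\ sample $\cR'=\{R_1,\dots,R_m\}$ from $\cR$ according to $w$ and, for each $q\in Q$, define $X_i(q):=\bbone_{q\in R_i}$, so that $p(q):=\EE[X_i(q)]=w(\cR[q])/w(\cR)$ and $|\cR'[q]|/|\cR'|=\tfrac{1}{m}\sum_i X_i(q)$. The objective is thus to bound
\[
\Pr\!\Bigl[\sup_{q\in Q}\bigl|\tfrac{1}{m}\textstyle\sum_i X_i(q)-p(q)\bigr|>\epsilon\Bigr]\le \sigma.
\]
The direct union bound over $q\in Q$ fails because $Q$ may be infinite, which is exactly the role symmetrization plays.

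The heart of the argument is to introduce an independent ghost sample $\cR''=\{R_1',\dots,R_m'\}$ of the same size and show, via a Chebyshev-type argument on the ghost (which needs the mild condition $m\ge 2/\epsilon^2$), that the bad event is at most twice the probability of the two-sample deviation
$\Pr\!\bigl[\sup_q \tfrac{1}{m}\bigl||\cR'[q]|-|\cR''[q]|\bigr|>\epsilon/2\bigr]$. Conditionally on the multiset $S:=\cR'\cup\cR''$ of size $2m$, the supremum over $q\in Q$ reduces to a supremum over the finite projection $\cQ_S$ of cardinality at most $g_{\cF^*}(2m)$, because two queries $q,q'$ inducing the same subset $\cR[q]\cap S=\cR[q']\cap S$ give identical counts. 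Furthermore, conditionally on $S$ the partition $(\cR',\cR'')$ is uniform over balanced splits, which one can dominate by independent Rademacher swaps of the pairs $(R_i,R_i')$; for each fixed $q$, Hoeffding's inequality then yields a tail of $2\exp(-m\epsilon^2/8)$.

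A union bound over $\cQ_S$ produces $\Pr[\text{bad}]\le 4\,g_{\cF^*}(2m)\,e^{-m\epsilon^2/8}$, and substituting the Sauer--Shelah bound $g_{\cF^*}(2m)\le(2em/d^*)^{d^*}$ with $d^*<2^{d+1}$, solving for $m$ so that this is at most $\sigma$ gives the stated sample size $m=O\!\bigl(\tfrac{d\,2^d}{\epsilon^2}\log\tfrac{1}{\epsilon\sigma}\bigr)$, where the factor $d\,2^d$ absorbs the $d^*\log(1/\epsilon)$ term. The main technical obstacle is the symmetrization step itself: one must justify that the ghost empirical average on any ``bad'' dual range is within $\epsilon/2$ of its expectation with conditional probability at least $1/2$, and this is precisely where Chebyshev (not Chernoff) is required so that the bound holds uniformly for ranges of arbitrarily small $w$-measure, and where the lower bound $m\epsilon^2\ge 2$ enters.
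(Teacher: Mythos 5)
Your proposal is correct and is precisely the classical double-sample symmetrization argument of Vapnik--Chervonenkis (ghost sample plus Chebyshev, reduction to the finite trace via Sauer--Shelah applied to the dual space with $\VCd(\cF^*)<2^{d+1}$, Rademacher swaps with Hoeffding, then a union bound), which is exactly the proof contained in the sources \cite{AS08,C00,VC71} that the paper cites for this theorem without reproving it. The bookkeeping also checks out: solving $4\,g_{\cF^*}(2m)e^{-m\epsilon^2/8}\le\sigma$ with $g_{\cF^*}(2m)\le(2em/d^*)^{d^*}$ and $d^*=O(2^{d})$ indeed yields the stated $O\bigl(\frac{d2^{d}}{\epsilon^2}\log\frac{1}{\epsilon\sigma}\bigr)$ sample size.
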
  
\subsection{The fractional problem}
Given a range space $\cF=(Q,\cR)$, satisfying assumptions (A1)-(A3), the fractional problem seeks to find a measure $\mu$ on $Q$, such that $\mu(R)\ge 1$ for all $R\in \cR$ and $\mu(Q)$ is minimized\footnote{We may as well restrict $\mu$ to have finite support and replace the integrals over $Q$ by summations.}:

\begin{align}\label{FH}
z^*_{\cF}:=\tag{\textsc{F-hitting}}\inf_{\mu} & \int_{q \in Q} \mu(q)dq\\
\text{s.t.} \hspace{6pt} & \int_{q \in R}\mu(q)dq \geq 1, \forall R \in \cR, \label{e-1}\\
& \mu(q) \geq 0, \forall q \in Q.\nonumber
\end{align}
Equivalently, it is required to find a {\it probability} measure $\mu:Q\to[0,1]$ that solves the maximin problem: $\sup_{\mu}\inf_{R\in\cR}\mu(R)$.

\medskip

\begin{proposition}\label{p1} 
	For a range space $\cF$ satisfying (A2), we have $\OPT_{\cF}\ge z^*_{\cF}$.
\end{proposition}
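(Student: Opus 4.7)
The plan is to exhibit an explicit feasible measure $\mu$ for the fractional relaxation \raf{FH} whose total mass equals $\OPT_\cF$; since $z^*_\cF$ is the infimum over all feasible measures, this immediately yields $z^*_\cF \le \OPT_\cF$.

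Concretely, by assumption (A2) there is a finite integral optimum $H^\star \subseteq Q$ with $|H^\star|=\OPT_\cF$, so let $H^\star = \{p_1,\ldots,p_k\}$ where $k=\OPT_\cF$. Define the finite-support measure
\[
\mu \;:=\; \sum_{i=1}^{k} \bdelta_{p_i},
\]
i.e.\ a unit Dirac mass at each point of $H^\star$. Clearly $\mu(q)\ge 0$ for every $q\in Q$, and
\[
\int_{q\in Q}\mu(q)\,dq \;=\; \sum_{i=1}^{k}\int_{q\in Q}\bdelta_{p_i}(q)\,dq \;=\; k \;=\; \OPT_\cF .
\]
For every $R\in\cR$, since $H^\star$ is a hitting set we have $H^\star\cap R\neq\emptyset$, so at least one of the $p_i$ lies in $R$, and the defining property of the Dirac measure gives
\[
\int_{q\in R}\mu(q)\,dq \;=\; |H^\star\cap R| \;\ge\; 1,
\]
which verifies constraint \raf{e-1}.

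Thus $\mu$ is feasible for \raf{FH} with objective value $\OPT_\cF$, and taking the infimum over all feasible measures yields $z^*_\cF \le \OPT_\cF$, as required. There is no real obstacle here — the only subtle point is that the infimum in \raf{FH} is unrestricted in the support size, so one must note (as the footnote in the excerpt already remarks) that measures of finite support are admissible, which is exactly what makes the one-line construction above legitimate.
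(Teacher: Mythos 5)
Your proposal is correct and matches the paper's own proof essentially verbatim: both take the finite integral optimum guaranteed by (A2), place a unit Dirac mass at each of its points, and verify that the resulting measure is feasible for \raf{FH} with objective value $\OPT_\cF$, giving $z^*_\cF\le\OPT_\cF$. Nothing further is needed.
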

\begin{proof}
	Given a finite integral optimal solution $P^*$, we define a measure $\mu$ of support $\OPT_{\cF}$ by  $\mu(q):=\sum_{p\in P^*}\bdelta_p(q)$. Then $\mu(Q)=\int_{q\in Q}\sum_{p\in P^*}\bdelta_p(q)dq=\sum_{p\in P^*}\int_{q\in Q}\bdelta_p(q)dq=\sum_{p\in P^*}1=|P^*|=\OPT_{\cF}$ and $\mu(R)=\int_{q\in R}\sum_{p\in P^*}\bdelta_p(q)dq=\sum_{p\in P^*}\int_{q\in R}\bdelta_p(q)dq=\sum_{p\in P^*}\bbone_{p\in R}=|\{p\in P^*:~p\in R\}|\ge 1$, for all $R\in\cR$, since $P^*$ is a hitting set. Since $\mu$ is feasible for \raf{FH}, the claim follows.
\end{proof} 

Assume $\cF$ satisfies (A3). For $\alpha\ge 1$, we say that $\mu:Q\to\RR_+$ is an {\it $\alpha$-approximate} solution for \raf{FH} if $\mu$ is feasible for \raf{FH} and $\mu(Q)\leq\alpha \cdot z^*_{\cF}.$
For $\beta\in[0,1]$, we say that $\mu$ is $\beta$-feasible if $\mu(R)\ge 1$ for all $R\in\cR'$, where $\cR'\subseteq\cR$ satisfies $w_0(\cR')\ge\beta \cdot w_0(\cR)$. Finally, we say that $\mu$ is an $(\alpha,\beta)$-approximate solution for \raf{FH} if $\mu$ is $\alpha$-approximate and $\beta$-feasible. 
\subsection{Rounding the fractional solution}

Br\"{o}nnimann and Goodrich \cite{BG95} gave a multiplicative weight updates algorithm for approximating the minimum hitting set for a {\it finite} range space satisfying (A1$'$) and admitting an $\epsilon$-net of size $s_{\cF}(\frac{1}{\epsilon})$. For completeness, their algorithm is given as Algorithm~\ref{BG-alg} in Appendix~\ref{sec:BG}, and works as follows. It first guesses the value of the optimal solution (within a factor of 2), and initializes the weights of all {\it points} to $1$. It then invokes Theorem \ref{det-net} to find an $\epsilon=\frac{1}{2\OPT_{\cF}}$-net of size $s_{\cF}(\frac{1}{\epsilon})$. If there is a range $R$ that is not hit by the net (which can be checked by the subsystem oracle), the weights of all the points in $R$ are doubled. The process is shown to terminate in $O(\OPT_{\cF}\log\frac{|Q|}{\OPT_{\cF}})$ iterations, giving an $s_{\cF}(2\OPT_{\cF})/\OPT_{\cF}$-approximation.
Even et al. \cite{ERS05} strengthen this result by using the linear programming relaxation to get $s_{\cF}(z_{\cF}^*)/z^*_{\cF}$
-approximation. We can restate this result as follows. 

\begin{lemma}\label{l111}
	Let $\cF=(Q,\cR)$ be a range space admitting an $\epsilon$-net of size $s_{\cF}(\frac{1}{\epsilon})$ and $\mu$ be a measure on $Q$ satisfying \raf{e-1}. Then there is a hitting set for $\cR$ of size $s_{\cF}(\mu(Q))$. 
	\end{lemma}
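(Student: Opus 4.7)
The plan is to normalize the fractional solution $\mu$ into a probability measure, observe that every range then has mass at least $1/\mu(Q)$, and invoke the $\epsilon$-net guarantee with $\epsilon := 1/\mu(Q)$.

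More concretely, let $z := \mu(Q)$ and define the probability measure $\mu' := \mu / z$ on $Q$ (we may assume $z > 0$, since otherwise feasibility \raf{e-1} forces $\cR$ to be empty and the statement is vacuous). For every $R \in \cR$, feasibility gives
\[
\mu'(R) \;=\; \frac{\mu(R)}{\mu(Q)} \;\ge\; \frac{1}{z} \;=\; \epsilon \cdot \mu'(Q),
\]
where $\epsilon := 1/z$. Hence every range in $\cR$ is $\epsilon$-heavy with respect to $\mu'$.

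By the assumption that $\cF$ admits an $\epsilon$-net of size $s_{\cF}(1/\epsilon)$, there exists a set $N \subseteq Q$ with $|N| \le s_{\cF}(1/\epsilon) = s_{\cF}(z) = s_{\cF}(\mu(Q))$ such that $N \cap R \neq \emptyset$ for every range whose $\mu'$-measure is at least $\epsilon$. Since this condition holds for all $R \in \cR$ by the previous paragraph, $N$ is a hitting set for $\cR$ of the desired size.

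The only potential subtlety, rather than a real obstacle, is to check that the $\epsilon$-net guarantee (phrased in Theorem~\ref{rand-net} with respect to an arbitrary probability measure) can be applied to $\mu'$; this is exactly the setting of that theorem, so nothing beyond a clean rescaling is required. If one wished to emphasize the algorithmic side, one could further note that when $\mu$ has finite support Theorem~\ref{det-net} produces such a net deterministically, but for the existential statement of the lemma the rescaling argument above suffices.
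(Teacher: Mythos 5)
Your proposal is correct and is essentially the paper's own proof: both set $\epsilon:=1/\mu(Q)$, observe that feasibility \raf{e-1} makes every range $\epsilon$-heavy, and conclude that an $\epsilon$-net of size $s_{\cF}(1/\epsilon)=s_{\cF}(\mu(Q))$ is a hitting set. The normalization to a probability measure is only a cosmetic rescaling, since the paper's definition of an $\epsilon$-net is stated for an arbitrary finite measure with threshold $\epsilon\cdot\mu(Q)$.
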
 

\begin{proof}
	Let $\epsilon:=\frac{1}{\mu(Q)}$. Then for all $R\in\cR$ we have $\mu(R)\ge1=\epsilon\cdot\mu(Q)$, and hence an $\epsilon$-net for $\cR$ is actually a hitting set.   
\end{proof}
\begin{corollary}\label{cor1}
	Let $\cF=(Q,\cR)$ be a range space of VC-dimension $d$ and $\mu$ be a measure on $Q$ satisfying \raf{e-1}. Then a random sample of size $O(d\cdot\mu(Q)\log(\mu(Q)))$, w.r.t. the probability measure $\mu':=\frac{\mu}{\mu(Q)}$, is a hitting set for $\cR$  with probability $\Omega(1)$. 
	Furthermore, if $\mu$ has support $K$ then there is a deterministic algorithm that computes a hitting set for $\cR$ of size $O(d\cdot\mu(Q)\log(d\cdot\mu(Q)))$ in time $O(d)^{3d}\mu(Q)^{2d}\log^d(d\cdot\mu(Q))K$. 
	\end{corollary}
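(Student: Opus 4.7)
The plan is to reduce the statement directly to Lemma~\ref{l111} combined with the two $\epsilon$-net theorems stated earlier, using the normalization $\mu' := \mu/\mu(Q)$ as the probability measure feeding into those theorems.

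First I would set $\epsilon := 1/\mu(Q)$ and observe the key translation between the covering constraint and the $\epsilon$-heaviness condition: for every $R \in \cR$, the hypothesis $\mu(R) \ge 1$ yields
\[
\mu'(R) \;=\; \frac{\mu(R)}{\mu(Q)} \;\ge\; \frac{1}{\mu(Q)} \;=\; \epsilon \;=\; \epsilon\cdot\mu'(Q),
\]
so every range has $\mu'$-measure at least $\epsilon$. Consequently any $\epsilon$-net for $\cR$ with respect to $\mu'$ automatically meets every $R \in \cR$, i.e. is a hitting set. This is precisely the content that underlies Lemma~\ref{l111}, and I would invoke that lemma explicitly to avoid repeating the argument.

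For the randomized part I would then apply Theorem~\ref{rand-net} to the probability measure $\mu'$ with parameter $\epsilon = 1/\mu(Q)$. The theorem guarantees that a random sample (w.r.t.\ $\mu'$) of size $s_{\cF}(1/\epsilon) = O\!\left(\frac{d}{\epsilon}\log\frac{1}{\epsilon}\right) = O(d\cdot \mu(Q)\log \mu(Q))$ is an $\epsilon$-net with probability $\Omega(1)$, and by the translation above such a net is a hitting set. For the deterministic part, since $\mu$ has support $K$, so does the normalized measure $\mu'$, and I would apply Theorem~\ref{det-net} with the same choice of $\epsilon$: it produces an $\epsilon$-net (hence a hitting set) of size $O\!\left(\frac{d}{\epsilon}\log\frac{d}{\epsilon}\right) = O(d\cdot\mu(Q)\log(d\cdot\mu(Q)))$ in time
\[
O(d)^{3d}\,\frac{1}{\epsilon^{2d}}\,\log^d\!\Bigl(\frac{d}{\epsilon}\Bigr)\,K \;=\; O(d)^{3d}\,\mu(Q)^{2d}\,\log^d\bigl(d\cdot\mu(Q)\bigr)\,K,
\]
matching the claimed bound.

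There is really no obstacle here beyond being careful with the normalization: the only thing to double-check is that scaling by $1/\mu(Q)$ converts the constraint $\mu(R) \ge 1$ into exactly the threshold needed by the definition of $\epsilon$-net, and that the constants and logarithmic factors from Theorems~\ref{rand-net} and \ref{det-net} propagate as stated. Both verifications are mechanical, so the corollary follows immediately from the two theorems via Lemma~\ref{l111}.
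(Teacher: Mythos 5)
Your proposal is correct and follows exactly the paper's own route: the paper likewise derives the corollary by combining Lemma~\ref{l111} (with $\epsilon = 1/\mu(Q)$) with Theorem~\ref{rand-net} for the randomized part and Theorem~\ref{det-net} for the deterministic part. You merely spell out the normalization details that the paper leaves implicit.
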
 
\begin{proof}
	In view of Lemma~\ref{l111}, the two parts of the corollary follow from Theorems~\ref{rand-net} and \ref{det-net}, respectively.
\end{proof}


 Further improvements on the Br\"{o}nnimann-Goodrich algorithm can be found in \cite{AP14}.

\section{Solving the fractional problem -- Main result}\label{sec:main}

We make the following further assumption:
\begin{itemize}
\item[(A4)] There is a deterministic (resp., randomized) oracle \MO$(\cF,w,\omega)$ (resp., \MO$(\cF,w,\sigma,\omega)$), that given a range space $\cF=(Q,\cR)$, a finite measure $w:\cR\to\RR_+$ on $\cR$, and $\omega>0$, returns (resp., with probability $1-\sigma$) a point $p\in Q$ such that
   $$
   \xi_w(p)\geq (1-\omega)\max_{q\in Q}\xi_w(q),
   $$
   where $\xi_w(p):=w(\cR[p])=\int_{R\in\cR}w(R)\bbone_{p\in R}dR$.
   \end{itemize}
  The following is the main result of the paper.
   
   \begin{theorem}\label{t-main}
   	Given a range space $\cF$ satisfying (A1)-(A4) and $\eps,\delta,\omega\in(0,1)$, there is a deterministic (resp., randomized) algorithm that finds (resp., with probability $\Omega(1)$) a measure $\mu$ of support $K:=O(\frac{\gamma}{\eps^3(1-\omega)}\log \frac{\gamma}{\eps}\cdot\OPT_{\cF}\log\frac{\OPT_\cF}{\eps\delta(1-\omega)})$ that is a $(\frac{1+5\eps}{1-\omega},1-\delta)$-approximate solution for \raf{FH}, using $K$ calls to the oracle \MO$(\cF,w,\omega)$ (resp., \MO$(\cF,w,\sigma,\omega)$).	
   \end{theorem}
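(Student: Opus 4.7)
The plan is to extend the Garg--K\"onemann multiplicative weight update (MWU) framework for covering linear programs to the infinite-dimensional program \raf{FH}, maintaining weights on the (infinite) range family $\cR$ rather than on the points. Initialize $w_1:=w_0$ (the measure from assumption (A3)) and iterate for $t=1,\dots,T$: call \MO$(\cF,w_t,\omega)$ to obtain a point $p_t\in Q$ with $\xi_{w_t}(p_t)\ge(1-\omega)\sup_{q\in Q}\xi_{w_t}(q)$, and update the weights multiplicatively via $w_{t+1}(R):=w_t(R)(1-\eps)^{\bbone_{p_t\in R}}$, so that ranges containing $p_t$ see their weight decrease. Finally, return the empirical measure $\mu:=\frac{c}{T}\sum_{t=1}^T\bdelta_{p_t}$ scaled by a constant $c\approx(1+5\eps)z^*_\cF/(1-\omega)$, which is exactly the targeted total mass $\mu(Q)=c$.

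The analysis tracks the potential $W_t:=\int_{\cR} w_t(R)\,dR$ by a two-sided bound. For the \emph{upper bound}, using $(1-\eps)^{\bbone_{p\in R}}\le 1-\eps\,\bbone_{p\in R}$ gives $W_{t+1}\le W_t-\eps\,\xi_{w_t}(p_t)$. The crucial step, replacing the usual finite-dimensional LP-duality argument, is that for any non-negative measure $w$ on $\cR$,
\begin{equation*}
\sup_{q\in Q}\xi_w(q)\;\ge\;\frac{1}{z^*_\cF}\int_\cR w(R)\,dR,
\end{equation*}
which follows from weak duality: integrating $\xi_w$ against any feasible $\mu^*$ for \raf{FH} of total mass arbitrarily close to $z^*_\cF$ and applying Fubini yields $\int_Q\xi_w\,d\mu^*=\int_\cR w(R)\mu^*(R)\,dR\ge\int_\cR w(R)\,dR$. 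Combining with the oracle guarantee yields $W_{t+1}\le W_t\bigl(1-\eps(1-\omega)/z^*_\cF\bigr)$, hence $W_T\le W_0\exp\bigl(-T\eps(1-\omega)/z^*_\cF\bigr)$. For the \emph{lower bound}, setting $N_T(R):=|\{t:p_t\in R\}|$, we have $w_T(R)=w_0(R)(1-\eps)^{N_T(R)}$; since $\mu(R)=(c/T)N_T(R)$, a range $R$ fails to be hit by $\mu$ (i.e., $\mu(R)<1$) precisely when $N_T(R)<T/c$, so $W_T\ge w_0(\cR_{\mathrm{bad}})(1-\eps)^{T/c}$.

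Combining the two bounds under the assumption $w_0(\cR_{\mathrm{bad}})>\delta\, w_0(\cR)$, and using $-\ln(1-\eps)\le \eps+\eps^2$, the choice $c=(1+5\eps)z^*_\cF/(1-\omega)$ forces
$T<z^*_\cF\ln(1/\delta)\big/\bigl(\Theta(\eps^2)(1-\omega)\bigr)$. Taking $T$ just above this threshold therefore rules out that assumption, simultaneously giving the required $\alpha$-approximation $\mu(Q)=c=(1+5\eps)z^*_\cF/(1-\omega)$ and the required $(1-\delta)$-feasibility. Since $z^*_\cF\le\OPT_\cF$ and the support of $\mu$ is at most $T$, this already yields a bound of the form $T=O\bigl(\OPT_\cF\log(1/\delta)/(\eps^2(1-\omega))\bigr)$, close to the desired $K$.

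Two further ingredients bridge the remaining gap and handle the infinite setting. First, $z^*_\cF$ is not known in advance, so an outer loop geometrically guesses its value in $[1,\OPT_\cF]$, terminating as soon as the produced $\mu$ passes a feasibility test realized through an $\eps$-net subroutine (Theorem~\ref{det-net}); this accounts for the $\log(\OPT_\cF/(\eps\delta(1-\omega)))$ factor. Second, to execute the updates and the ``$\sup_q\xi_w$''-computations on the infinite family $\cR$, one exploits assumption (A1): after $t$ chosen points, the Sauer--Shelah-type bound groups the ranges into $O(t^\gamma)$ equivalence classes sharing a common weight, and each round's oracle call and weight refresh are performed on an $\eps$-approximation (or $\eps$-net) of size $O(\gamma\log(\gamma/\eps)/\eps)$ of the current weighted subsystem, which is what produces the $\gamma\log(\gamma/\eps)/\eps$ factor in $K$. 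The main obstacle is the rigorous infinite-dimensional treatment of the weak-duality step and the bookkeeping that keeps the MWU updates tractable on the continuous family $\cR$; once these are in place, the rest is a faithful adaptation of the finite-dimensional GK98 argument, with the randomness in \MO handled by a standard union bound over the $K$ rounds.
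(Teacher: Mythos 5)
Your proposal is essentially correct in its core, but it takes a genuinely different route from the paper, and one ingredient as written does not work. The paper never guesses $z^*_{\cF}$ and never rescales: Algorithm~\ref{alg} caps the number of times an active range may be hit by a threshold $T$ (chosen as in \raf{eq:T}), runs until the ranges hit fewer than $T$ times have $w_0$-measure below $\delta\,w_0(\cR)$, and outputs $\frac{1}{T}\sum_{p\in P_t}\bdelta_p$. There the roles are reversed relative to your argument: $(1-\delta)$-feasibility is immediate by construction (a deactivated range has been hit $\ge T$ times), while the approximation ratio is extracted from the potential bound by grouping ranges into the $g_\cF(t_{\max})$ projection classes of Sauer--Shelah (Lemma~\ref{l-bd2}) -- this is exactly where assumption (A1) and the extra $\frac{\gamma}{\eps^2}\log\frac{\gamma}{\eps}\log\frac{\OPT_\cF}{\eps\delta(1-\omega)}$ factors in $K$ enter, and the iteration count is bounded via the \emph{integral} optimum with an auxiliary weight argument (Lemma~\ref{l-bd1}). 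Your scheme instead fixes the horizon, scales the output to total mass $c\approx(1+5\eps)z/(1-\omega)$ for a guessed $z$, gets the ratio by construction and feasibility from the two-sided potential bound; this is closer to vanilla Garg--K\"onemann, needs (A1) nowhere in the analysis, and (per run) gives a smaller support bound $O(z^*_{\cF}\log(1/\delta)/(\eps^2(1-\omega)))$, comfortably inside the stated $K$ even after multiplying by the $O(\log_{1+\eps}\OPT_{\cF})$ guesses (with a small, fixable constant-factor loss in the ratio from $(1+\eps)$-guessing).

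The gap is in how you certify termination of the guessing loop. A ``feasibility test realized through an $\eps$-net subroutine (Theorem~\ref{det-net})'' cannot do this: that theorem builds an $\eps$-net for the \emph{primal} space from a point measure and says nothing about what $w_0$-fraction of the (infinite) family $\cR$ satisfies $\mu(R)\ge 1$. For a wrong guess $z<z^*_{\cF}$ your potential argument gives no guarantee, so you must be able to decide whether $w_0(\{R:\;|P_T\cap R|<T/c\})\le\delta\,w_0(\cR)$. This requires evaluating $w_0$ on the equivalence classes of ranges induced by the chosen points (which is precisely the access the paper's stopping criterion $w_0(\cR_t)\ge\delta\,w_0(\cR)$ implicitly uses, via the subsystem oracle of (A1$'$) and cell-wise measure computation), or else a dual $\eps$-approximation, which needs sampling from $w_0$ and is not granted by (A1)--(A4) alone. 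With that replacement the argument goes through. Finally, your closing remark that running the oracle on an $\eps$-approximation of size $O(\frac{\gamma}{\eps}\log\frac{\gamma}{\eps})$ ``produces the $\gamma\log(\gamma/\eps)$ factor in $K$'' misattributes that factor: in the paper it comes from requiring $T\ge\ln(g_\cF(t_{\max})/\delta)/\eps^2$ for the cap $T$, not from the oracle's internal implementation (which is a separate matter, handled in Section~\ref{sec:max}); your own argument does not need that factor at all.
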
   
                                
   In view of Corollary~\ref{cor1}, we have the following theorem as an immediate consequence of Theorem~\ref{t-main}.
   
   \begin{theorem}[Main Theorem]\label{t-main2}
   	Let $\cF=(Q,\cR)$ be a range space satisfying (A1)-(A4) and admitting a hitting set of size $s_{\cF}(\frac{1}{\epsilon})$ and $\eps,\delta,\omega\in(0,1)$ be given parameters. Then there is a (deterministic) algorithm that computes a set of size $s_{\cF}(z_{\cF}^*)$, hitting a subset of $\cR$ of measure at least $(1-\delta)w_0(\cR)$, using $O(\frac{\gamma}{\eps^3(1-\omega)}\log \frac{\gamma}{\eps}\cdot\OPT_{\cF}\log\frac{\OPT_\cF}{\eps\delta(1-\omega)})$ calls to the oracle \MO$(\ldots,\omega)$ and a single call to an $\epsilon$-net finder.
   \end{theorem}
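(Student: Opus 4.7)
The plan is to chain Theorem~\ref{t-main} with the deterministic $\epsilon$-net construction of Theorem~\ref{det-net}, glued together by Lemma~\ref{l111}. First I would invoke Theorem~\ref{t-main} on the input $\cF$ with the given parameters $\eps,\delta,\omega$ to obtain a finite-support measure $\mu:Q\to\RR_+$ of support size
\[
K=O\!\left(\tfrac{\gamma}{\eps^3(1-\omega)}\log\tfrac{\gamma}{\eps}\cdot\OPT_\cF\log\tfrac{\OPT_\cF}{\eps\delta(1-\omega)}\right)
\]
that is $\bigl(\tfrac{1+5\eps}{1-\omega},\,1-\delta\bigr)$-approximate for \raf{FH}. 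In particular, $\mu(Q)\le\tfrac{1+5\eps}{1-\omega}\cdot z_\cF^*$, and there exists (implicitly) a subset $\cR'\subseteq\cR$ with $w_0(\cR')\ge(1-\delta)\,w_0(\cR)$ such that $\mu(R)\ge 1$ for every $R\in\cR'$. This step consumes the $K$ calls to \MO promised by the theorem.

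Next I would apply Lemma~\ref{l111} with parameter $\eps':=1/\mu(Q)$: for every $R\in\cR'$ we have $\mu(R)\ge 1=\eps'\cdot\mu(Q)$, so any $\eps'$-net for $\cR$ taken with respect to the measure $\mu$ must hit every range in $\cR'$. Because $\mu$ has finite support $K$ and $\cF$ satisfies the subsystem assumption used throughout the paper, Theorem~\ref{det-net} provides a deterministic algorithm that in a single call computes such an $\eps'$-net of size
\[
s_\cF(1/\eps')=s_\cF(\mu(Q))\le s_\cF\!\left(\tfrac{1+5\eps}{1-\omega}\,z_\cF^*\right)=O\bigl(s_\cF(z_\cF^*)\bigr),
\]
where the final identification absorbs the constant scaling into $s_\cF$ (the bounds on $s_\cF$ that we will instantiate are linear up to log factors in their argument, so rescaling by a bounded factor only affects the hidden constant). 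The resulting set has size $s_\cF(z_\cF^*)$ and hits every range in $\cR'$, hence hits a subset of $\cR$ of $w_0$-measure at least $(1-\delta)\,w_0(\cR)$, which is exactly the conclusion of the theorem.

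All the substantive work has been delegated to Theorem~\ref{t-main}; the present statement is essentially a rounding corollary, so I do not anticipate a hard step — the only subtlety worth flagging is that the "heavy" subset $\cR'$ is specified only implicitly by the fractional solution, and the algorithm never enumerates it. The universal form of the $\eps'$-net guarantee (\emph{every} range of $\mu$-mass at least $\eps'\mu(Q)$ is hit) is exactly what lets us certify coverage of $\cR'$ without access to it. In the variant where \MO is randomized, I would simply union-bound the per-call failure probability $\sigma$ over the $K$ calls (taking, e.g., $\sigma=\Theta(1/K)$) to preserve constant overall success probability, the rest of the reduction being identical.
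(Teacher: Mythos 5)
Your proposal is correct and follows essentially the same route as the paper, which obtains Theorem~\ref{t-main2} as an immediate consequence of Theorem~\ref{t-main} combined with the rounding step encapsulated in Lemma~\ref{l111}/Corollary~\ref{cor1} (i.e., a deterministic $\epsilon$-net computation with $\epsilon=1/\mu(Q)$ on the finite-support measure). Your handling of the implicitly defined heavy subfamily $\cR'$ via the universal $\epsilon$-net guarantee, and the absorption of the $\frac{1+5\eps}{1-\omega}$ factor into the $s_\cF(z^*_\cF)$ bound, matches the paper's intent.
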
       
   In section \ref{sec:max}, we observe that the maximization oracle can be implemented in randomized polynomial time. As a consequence, we can extend Corollary~\ref{cor1} as follows (under the assumption of the availability of subsystem and sampling oracles in the dual range space); see Section~\ref{sec:max} for details.
    
   \begin{corollary}\label{cor-main} Let $\cF=(Q,\cR)$ be a range space of VC-dimension $d$ satisfying (A2) and (A3) and $\eps,\delta\in(0,1)$ be given parameters. Then there is a randomized algorithm that computes a set of size $O(d \cdot z_{\cF}^*\log(d\cdot z_{\cF}^*))$, hitting a subset of $\cR$ of measure at least $(1-\delta)w_0(\cR)$, in time $O(K\cdot g_{\cF^*}(\frac{d2^{d}\OPT_{\cF}^{2}}{\eps^2}\log \frac{\OPT_{\cF}}{\eps}))$, where $K:=O(\frac{d}{\eps^3(1-\eps)}\log \frac{d}{\eps}\cdot\OPT_{\cF}\log\frac{\OPT_\cF}{\eps\delta(1-\eps)})$.
   \end{corollary}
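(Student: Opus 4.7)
The plan is to obtain Corollary~\ref{cor-main} by instantiating Theorem~\ref{t-main2} with VC-dimension $d$, discharging assumption (A1) via the Sauer--Shelah Lemma (which gives $\gamma=O(d)$) and implementing assumption (A4) in randomized polynomial time using the $\epsilon$-approximation machinery of Theorem~\ref{thm:eps-approx}.

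The crux is the maximization oracle. First I would record the weak-duality bound $\max_{q\in Q}\xi_w(q)\ge w(\cR)/z^{*}_{\cF}\ge w(\cR)/\OPT_{\cF}$, which follows for any finite measure $w$ on $\cR$ and any $\mu$ feasible for \raf{FH} from $w(\cR)\le\int_{R\in\cR} w(R)\mu(R)dR=\int_{q\in Q}\mu(q)\xi_w(q)dq\le\mu(Q)\max_q\xi_w(q)$. Second, I would apply Theorem~\ref{thm:eps-approx} with accuracy $\epsilon:=\omega/(2\OPT_{\cF})$ and failure probability $\sigma$: a sample $\cR'\subseteq\cR$ of size $N:=O\bigl(\tfrac{d\,2^{d}\OPT_{\cF}^{2}}{\omega^{2}}\log\tfrac{\OPT_{\cF}}{\omega\sigma}\bigr)$, drawn according to the probability measure $w/w(\cR)$, is an $\epsilon$-approximation for $\cF^{*}$ with probability $1-\sigma$. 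Third, using the subsystem oracle for $\cF^{*}$ restricted to $\cR'$, I would enumerate the at most $g_{\cF^{*}}(N)$ cells of $\cQ_{\cR'}$ and return a point $p$ maximizing $|\cR'[p]|$. Combining \raf{eps-approx} with the duality bound then gives $\xi_w(p)\ge(1-\omega)\max_{q\in Q}\xi_w(q)$, satisfying the requirement of (A4).

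With (A1) and (A4) in place, I would invoke Theorem~\ref{t-main2} with $\omega:=\eps$ and take a union bound over the $K$ oracle calls by setting $\sigma:=\Theta(1/K)$, so that all calls succeed simultaneously with constant probability. This produces a measure $\mu$ of support $K$ satisfying $\mu(Q)=O(z^{*}_{\cF})$ and $(1-\delta)$-feasibility, with each oracle invocation costing $O(g_{\cF^{*}}(N))$ beyond sampling, which matches the stated running time. Finally, Corollary~\ref{cor1} applied to $\mu$ rounds it to an integral hitting set of size $O(d\cdot z^{*}_{\cF}\log(d\cdot z^{*}_{\cF}))$ for the sub-family of ranges that $\mu$ actually covers, whose total $w_0$-measure is at least $(1-\delta)w_0(\cR)$.

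The main obstacle will be the accuracy calibration inside the oracle: converting the additive error $\epsilon$ of Theorem~\ref{thm:eps-approx} into a multiplicative $(1-\omega)$-guarantee requires the weak-duality lower bound $\max_q\xi_w(q)\ge w(\cR)/\OPT_{\cF}$, and this bound is precisely what keeps the sample size $N$ polynomial in $\OPT_{\cF}$ rather than uncontrolled. Beyond this, the remaining bookkeeping (choice of $\sigma$, substituting $\gamma=d$, and verifying that Corollary~\ref{cor1} still applies to the hit sub-family) is routine.
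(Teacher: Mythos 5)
Your proposal is correct and follows essentially the same route as the paper: it instantiates Theorem~\ref{t-main2} with $\gamma=O(d)$ via Sauer--Shelah, implements (A4) exactly as in Section~\ref{sec:max} (an $\epsilon$-approximation for $\cF^*$ with $\epsilon=\omega/(2\OPT_{\cF})$, the dual subsystem oracle to find the heaviest cell, and a point-in-cell query), and rounds via Corollary~\ref{cor1}. The only cosmetic difference is that you obtain the bound $\max_{q}\xi_w(q)\ge w(\cR)/\OPT_{\cF}$ by weak duality against a feasible fractional solution, whereas Lemma~\ref{l-max} gets it directly from the integral optimum guaranteed by (A2); your explicit choice $\sigma=\Theta(1/K)$ for the union bound is, if anything, slightly more careful than the paper's treatment.
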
 
   Note by Lemma~\ref{VC2} that $g_{\cF^*}(r)\le r^{2^{d+1}}$, but stronger bounds can be obtained for special cases. 

\section{The algorithm}   \label{sec:algorithm}
The algorithm is shown in Algorithm \ref{alg} below.
For any iteration $t$, let us define the {\it active} range-subspace $\cF_t=(Q,\cR_t)$ of $\cF$, where
\begin{align*}
\cR_t := \{ R \in \cR: |P_t\cap R| < T\}.
\end{align*}
Clearly, (since these properties are hereditary) $\VCd(\cF_t)\le\VCd(\cF)$, 
and $\cF_t$ admits and $\epsilon$-net of size $s_{\cF}(\frac{1}{\epsilon})$ whenever $\cF$ does. 
For convenience, we assume below that $P_t$ is (possibly) a multi-set (repetitions allowed).

  Define
  \begin{align} \label{eq:T}
  T_0&:=\frac{\OPT_{\cF}}{\eps(1-\omega)\delta^{1/\gamma}}\left(\ln\frac{1}{1-\eps}+\ln\frac{1}{\eps\delta}\right),~~ a:=\frac{\gamma}{\eps^2}, ~~\text{ and }b:=\max\{\ln T_0, 1\},\nonumber\\
  T&:=e^2 a b(\ln(a+e-1)+1)=\Theta(\frac{\gamma}{\eps^2}\log \frac{\gamma}{\eps}\log\frac{\OPT_\cF}{\eps\delta(1-\omega)}).
  \end{align} 
  For simplicity of presentation, we will assume in what follows that the maximization oracle is deterministic; the extension to the probabilistic case is straightforward. 
  
\setlength{\algomargin}{.25in}
\begin{algorithm}[H]
	\label{alg}
	\SetAlgoLined
	\KwData{A range space $\cF=(Q,\cR)$ satsfying (A1)-(A4), and an approximation accuracies $\eps,\delta,\omega,\in(0,1)$.}
	\KwResult{A $(\frac{1+5\eps}{1-\omega},1-\delta)$-approximate solution $\mu$ for \raf{FH}. }
   
    $t\gets 0$; $P_0\gets\emptyset$; set $T$ as in \raf{eq:T}\\
    \While{$w_0(\cR_t)\ge\delta\cdot w_0(\cR)$}{
       define the measure $w_t:\cR_t\to\RR_+$ by $ w_t(R) \gets (1-\eps)^{|P_t\cap R|}w_0(R)$, for $R\in\cR_t$\\	
       $p_{t+1}\gets\MO(\cF_t,w_t,\omega)$ \label{s-oracle}\\
       $P_{t+1}\gets P_{t}\cup\{p_{t+1}\}$\\
       $t \gets t+1$\\
    }
    \Return the measure $\widehat\mu:Q\to\RR_+$ defined by $\widehat\mu(q)\gets\frac{1}{T}\sum_{p\in P_t}\bdelta_p(q)$	
\caption{The fractional covering algorithm}
\end{algorithm}


\section{Analysis}\label{sec:analysis}

Define the potential function
$
\Phi(t) :=  w_t(\cR_t),
$
where $w_t(R) := (1-\eps)^{|P_t\cap R|}w_0(R)$, and $P_t=\{p_{t'}:t'=1,\ldots,t\}$ is the set of points selected by the algorithm in step~\ref{s-oracle} upto time $t$. We can also write $w_{t+1}(R) = w_t(R) (1-\eps\cdot\bbone_{p_{t+1}\in R})$.

The analysis is done in three steps: the first one (Section~\ref{sec:pot}), which is typical for MWU methods, is to bound the potential function, at each iteration, in terms of the ratio between the current solution obtained by the algorithm at that iteration and the optimum fractional solution. The second step (Section~\ref{sec:time}) is to bound the number of iterations until the desired fraction of the ranges is hit. Finally, the third step (Section~\ref{sec:convergence}) uses the previous two steps to show that the algorithm reaches the required accuracy after a polynomial number of iterations.
\subsection{Bounding the potential}\label{sec:pot}
The following three lemmas are obtained by the standard analysis of MWU methods with "$\sum$"$'$s replaced by "$\int$"$'$s. 
\begin{lemma} For all $t=0,1,\ldots$, it holds that
\begin{align}
\Phi(t+1) \leq \Phi(t) \exp\left(-\frac{\eps}{\Phi(t)}\cdot w_t(\cR_t[p_{t+1}])\right). \label{eq3}
\end{align}
\end{lemma}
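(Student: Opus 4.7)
The plan is to follow the standard MWU potential-drop argument, with two small care points that come from the infinite / continuous setting: (i) sums over ranges become integrals with respect to the base measure $w_0$, and (ii) the ``active'' set of ranges $\cR_t$ shrinks with $t$, so one has to compare $\Phi(t+1)$ on $\cR_{t+1}$ with a quantity defined on $\cR_t$.

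First I would observe that $\cR_{t+1} \subseteq \cR_t$: once a range $R$ accumulates $T$ hits from $P_{t+1}$ it is removed from $\cR_{t+1}$, so only ranges that were already active at time $t$ can be active at time $t+1$. Combined with $w_{t+1}(R) \le w_t(R)$ for every $R \in \cR_t$ (weights only decrease), this gives
\begin{equation*}
\Phi(t+1) \;=\; w_{t+1}(\cR_{t+1}) \;\le\; w_{t+1}(\cR_t) \;=\; \int_{R \in \cR_t} w_t(R)\bigl(1-\eps\cdot\bbone_{p_{t+1}\in R}\bigr)\, dR,
\end{equation*}
where the last equality uses the update rule $w_{t+1}(R)=w_t(R)(1-\eps\cdot\bbone_{p_{t+1}\in R})$ for $R\in\cR_t$.

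Next, I would split the integral by linearity into
\begin{equation*}
\Phi(t+1) \;\le\; w_t(\cR_t) \;-\; \eps\int_{R\in\cR_t} w_t(R)\,\bbone_{p_{t+1}\in R}\,dR \;=\; \Phi(t) - \eps\cdot w_t(\cR_t[p_{t+1}]),
\end{equation*}
recognizing the integral as the definition of $w_t(\cR_t[p_{t+1}])$ from assumption (A4). Factoring out $\Phi(t)$ and applying the elementary inequality $1-x\le e^{-x}$ to $x = \eps\,w_t(\cR_t[p_{t+1}])/\Phi(t) \in [0,\eps] \subseteq [0,1]$ yields the claimed bound
\begin{equation*}
\Phi(t+1) \;\le\; \Phi(t)\exp\!\left(-\frac{\eps}{\Phi(t)}\cdot w_t(\cR_t[p_{t+1}])\right).
\end{equation*}

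There is no real obstacle here; the only subtle point to flag is the monotonicity $\cR_{t+1}\subseteq \cR_t$, which is what justifies bounding $\Phi(t+1)$ by an integral over $\cR_t$ rather than over $\cR_{t+1}$. One should also note that $\Phi(t)>0$ whenever the while-loop of Algorithm~\ref{alg} is still running (since $w_0(\cR_t)\ge \delta\,w_0(\cR)>0$ and weights are positive), so the division by $\Phi(t)$ in the exponent is well defined; if $\Phi(t)=0$ the inequality holds trivially as $\Phi(t+1)=0$ too.
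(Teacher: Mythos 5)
Your proof is correct and follows essentially the same route as the paper: use $\cR_{t+1}\subseteq\cR_t$ (together with nonnegativity of the weights, since $\eps<1$) to pass to an integral over $\cR_t$, apply the update rule, identify the integral with $w_t(\cR_t[p_{t+1}])$, and finish with $1-z\le e^{-z}$. The extra remarks on $\Phi(t)>0$ are a harmless refinement not spelled out in the paper.
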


\begin{proof}
	\begin{align*}
	\Phi(t+1) 	&=  \int_{R\in\cR_{t+1}} w_{t+1}(R) dR 
	=  \int_{R\in\cR_{t+1}} w_t(R) (1-\eps\cdot\bbone_{p_{t+1}\in R}) dR \nonumber \\
	&\leq  \int_{R\in\cR_{t}} w_t(R) (1-\eps\cdot\bbone_{p_{t+1}\in R}) dR  
    =  \Phi(t)  \left(1-\eps\int_{R\in\cR_{t}}\bbone_{p_{t+1}\in R}\frac{w_t(R)}{\Phi(t)} dR\right) \nonumber\\ 
	&\leq  \Phi(t) \exp\left(-\eps\int_{R\in\cR_{t}}\bbone_{p_{t+1}\in R}\frac{w_t(R)}{\Phi(t)} dR\right),
	\end{align*}
	where the first inequality is because $\cR_{t+1} \subseteq \cR_t$ since $|P_t \cap R|$ is non-decreasing in $t$, and the last inequality is because $1-z \leq e^{-z}$ for all $z$.
\end{proof}

\begin{lemma}
Let $\kappa(t):= \sum_{t'=0}^{t-1} \frac{w_{t'}(\cR_{t'}[p_{t'+1}])}{\Phi(t')}$. Then $z^*_{\cF}\cdot\kappa(t) \geq \frac{1-\omega}{1+\eps}|P(t)|$. \label{lem1}
\end{lemma}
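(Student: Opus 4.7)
The plan is to prove a per-iteration inequality
$$\frac{w_{t'}(\cR_{t'}[p_{t'+1}])}{\Phi(t')}\;\ge\;\frac{1-\omega}{(1+\eps)z^*_\cF}$$
and then sum over $t'=0,\dots,t-1$, using that $t=|P(t)|$, to obtain the claim. This is an LP-duality style calculation: a near-optimal fractional solution $\mu^*$ certifies a lower bound on $\max_q w_{t'}(\cR_{t'}[q])$, and assumption (A4) then connects that maximum to the point $p_{t'+1}$ actually selected at iteration $t'$.

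To set this up I would fix, once and for all, an $\eps$-nearly-optimal feasible measure $\mu^*$ for \raf{FH}: namely, one with $\mu^*(R)\ge 1$ for all $R\in\cR$ and $\mu^*(Q)\le (1+\eps)z^*_\cF$. Such a $\mu^*$ exists because $z^*_\cF$ is defined as an infimum; this is precisely where the factor $1+\eps$ in the statement comes from, and it cannot in general be dropped since the infimum in \raf{FH} need not be attained. Since $\cR_{t'}\subseteq\cR$, the feasibility inequality $\mu^*(R)\ge 1$ continues to hold for every $R\in\cR_{t'}$.

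The central computation is a Fubini swap. Writing $\mu^*(R)=\int_{q\in R}\mu^*(q)\,dq$ and exchanging the order of integration,
$$\Phi(t')=\int_{R\in\cR_{t'}}\!\!w_{t'}(R)\,dR \;\le\; \int_{R\in\cR_{t'}}\!\!\mu^*(R)\,w_{t'}(R)\,dR \;=\; \int_{q\in Q}\mu^*(q)\,w_{t'}(\cR_{t'}[q])\,dq \;\le\; \mu^*(Q)\cdot\max_{q\in Q}w_{t'}(\cR_{t'}[q]).$$
Invoking (A4) to replace the maximum by $\tfrac{1}{1-\omega}w_{t'}(\cR_{t'}[p_{t'+1}])$, dividing through by $\Phi(t')>0$, and using $\mu^*(Q)\le (1+\eps)z^*_\cF$ yields the per-iteration bound. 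Summing over $t'=0,\dots,t-1$ then gives $z^*_\cF\cdot\kappa(t)\ge\tfrac{1-\omega}{1+\eps}|P(t)|$.

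The only real conceptual step is the Fubini swap: it is exactly this manipulation that turns the ``point-space'' constraint $\mu^*(R)\ge 1$ into a statement about the integral of $w_{t'}(\cR_{t'}[q])$ over $q$, which is dominated by the supremum and thus controlled by the approximate maximization oracle. The technical justification for the swap is routine under the measurability framework fixed in (A3) and its footnote (where $Q$ and $\cR$ are identified with subsets of $\RR^k$, so everything reduces to ordinary Riemann/Fubini), so this step poses no real obstacle.
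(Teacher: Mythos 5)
Your proposal is correct and is essentially the paper's own argument: both fix a $(1+\eps)$-approximate feasible $\mu^*$ (which exists since $z^*_\cF$ is an infimum), use the oracle guarantee (A4) to relate $w_{t'}(\cR_{t'}[p_{t'+1}])$ to $\max_{q}w_{t'}(\cR_{t'}[q])$, and perform the same Fubini swap combined with feasibility $\mu^*(R)\ge 1$ to lower-bound the resulting integral by $\Phi(t')$. The only difference is presentational — you establish the per-iteration inequality and then sum, while the paper carries the sum through one long chain of inequalities — so there is no substantive gap.
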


\begin{proof}
Due to the choice of $p_{t'+1}$, we have that
\begin{align}
\xi_{t'}(p_{t'+1}):= w_{t'}(\cR_{t'}[p_{t'+1}])  &\geq (1-\omega)\max_{q \in Q} w_{t'}(\cR_{t'}[q]).  
\label{eq2}
\end{align}
Consequently, for a $(1+\epsilon)$-approximate solution $\mu^*$,
\begin{align*}
z_{\cF}^*\cdot\kappa(t) &= \sum_{t'=0}^{t-1}z^*_{\cF} \frac{w_{t'}(\cR_{t'}[p_{t'+1}])}{\Phi(t')} \ge\frac{1}{1+\eps} \sum_{t'=0}^{t-1} \left(\int_{q\in Q}\mu^*(q)dq\right)\int_{R\in\cR_{t'}} \bbone_{p_{t'+1}\in R}\frac{w_{t'}(R)}{\Phi(t')}dR \\
&\geq \frac{1-\omega}{1+\eps} \sum_{t'=0}^{t-1} \int_{q\in Q}\mu^*(q)dq \int_{R\in\cR_{t'}} \bbone_{q\in R}\frac{w_{t'}(R)}{\Phi(t')} dR
= \frac{1-\omega}{1+\eps} \sum_{t'=0}^{t-1} \int_{R\in\cR_{t'}}\left(\int_{q\in Q}\mu^*(q)\bbone_{q\in R}dq  \right)\frac{w_{t'}(R)}{\Phi(t')} dR \\
&= \frac{1-\omega}{1+\eps} \sum_{t'=0}^{t-1} \int_{R\in\cR_{t'}}\mu^*(R)  \frac{w_{t'}(R)}{\Phi(t')} dR 
\ge \frac{1-\omega}{1+\eps} \sum_{t'=0}^{t-1} \int_{R\in\cR_{t'}}  \frac{w_{t'}(R)}{\Phi(t')} dR \\
&= \frac{1-\omega}{1+\eps}\sum_{t'=0}^{t-1} 1=\frac{1-\omega}{1+\eps}|P(t)|.
\end{align*}
where the first inequality is due to the $(1+\eps)$-optimality of $\mu^*$, the second inequality is due to (\ref{eq2}), and the last inequality is due to the feasibility of $\mu^*$ for \raf{FH}.
\end{proof}

\begin{lemma}
	For all $t=0,1,\ldots,$ we have 
\begin{equation}\label{bd-pot}
\Phi(t) \leq \Phi(0) \exp\left( -{\eps}\cdot\frac{1-\omega}{1+\eps}\cdot\frac{|P(t)|}{z_{\cF}^*}\right).
\end{equation}
\end{lemma}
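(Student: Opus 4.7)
The plan is to chain the two preceding lemmas in the obvious way. First, I would iterate the per-step inequality \eqref{eq3}: applying it for $t' = 0, 1, \ldots, t-1$ and telescoping the product of exponentials, one obtains
\begin{equation*}
\Phi(t) \;\le\; \Phi(0)\,\exp\!\left(-\eps \sum_{t'=0}^{t-1} \frac{w_{t'}(\cR_{t'}[p_{t'+1}])}{\Phi(t')}\right) \;=\; \Phi(0)\,\exp\bigl(-\eps\,\kappa(t)\bigr),
\end{equation*}
where $\kappa(t)$ is exactly the quantity defined in Lemma~\ref{lem1}. This step is a formal induction on $t$ using the base case $\Phi(0) = \Phi(0)$ and the inductive step $\Phi(t'+1) \le \Phi(t') \exp(-\eps\, w_{t'}(\cR_{t'}[p_{t'+1}])/\Phi(t'))$.

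Second, I would invoke Lemma~\ref{lem1}, which yields
\begin{equation*}
\kappa(t) \;\ge\; \frac{1-\omega}{(1+\eps)\,z^*_{\cF}}\,|P(t)|.
\end{equation*}
Since the exponential function is monotone and the coefficient $-\eps$ is negative, substituting this lower bound on $\kappa(t)$ into the telescoped inequality for $\Phi(t)$ immediately gives the claimed bound \eqref{bd-pot}.

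There is essentially no obstacle here: the work is entirely in the two previous lemmas, and this lemma is a mechanical combination. The only thing to be careful about is the direction of the inequality when substituting the lower bound on $\kappa(t)$ into $\exp(-\eps\,\kappa(t))$, but since $\eps > 0$, a lower bound on $\kappa(t)$ produces an upper bound on $\exp(-\eps\,\kappa(t))$, which is exactly what is needed.
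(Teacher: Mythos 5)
Your proposal is correct and matches the paper's proof exactly: the paper also telescopes \raf{eq3} to get $\Phi(t)\le\Phi(0)\exp(-\eps\kappa(t))$ and then substitutes the lower bound on $\kappa(t)$ from Lemma~\ref{lem1}. Your remark about the inequality direction under the negative exponent is the only point of care, and you have handled it correctly.
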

\begin{proof}
By repeated application of (\ref{eq3}), and using the result in Lemma \ref{lem1}, we can deduce that
\begin{align*}
\Phi(t) 	&\leq  \Phi(0) \exp\left( -\sum_{t'=0}^{t-1}\frac{\eps}{\Phi(t')}\cdot w_{t'}(\cR_{t'}[p_{t'+1}])\right)
= \Phi(0) \exp\left( -\eps\kappa(t)\right) \\
&\leq \Phi(0) \exp\left( -{\eps}\cdot\frac{1-\omega}{1+\eps}\cdot\frac{|P(t)|}{z_{\cF}^*}\right).
\end{align*}
\end{proof}

\subsection{Bounding the number of iterations}\label{sec:time}

\begin{lemma}\label{l-bd1} After at most $t_{\max}:=\frac{\OPT_{\cF}}{\eps(1-\omega)}\left(T\ln\frac{1}{1-\eps}+\ln\frac{1}{\eps\delta}\right)$ iterations, we have $w_0(\cR_{t_f})<\delta\cdot w_0(\cR)$.
\end{lemma}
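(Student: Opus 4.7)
My plan is to chain the upper bound on the potential $\Phi(t)$ from \raf{bd-pot} with a matching lower bound that must hold whenever the while loop has not yet terminated. This is the standard closing move in an MWU convergence argument; the only twist here is that $\Phi$ lives on the current \emph{active} range subspace $\cR_t$, so the lower bound is read off from the defining condition $|P_t\cap R|<T$ for $R\in\cR_t$.

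For the lower bound, I will contrapositively assume the loop is still running at the start of iteration $t$, so $w_0(\cR_t)\ge\delta\cdot w_0(\cR)=\delta\cdot\Phi(0)$. Every $R\in\cR_t$ satisfies $|P_t\cap R|<T$ by the definition of $\cF_t$, hence $w_t(R)=(1-\eps)^{|P_t\cap R|}w_0(R)>(1-\eps)^T w_0(R)$. Integrating over $\cR_t$ yields $\Phi(t)=w_t(\cR_t)>(1-\eps)^T w_0(\cR_t)\ge(1-\eps)^T\delta\cdot\Phi(0)$. So, as long as the stopping condition has not fired, $\Phi(t)$ is bounded below by $(1-\eps)^T\delta\cdot\Phi(0)$.

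Combining this with \raf{bd-pot}, I get $(1-\eps)^T\delta\cdot\Phi(0)<\Phi(0)\exp\!\bigl(-\eps\cdot\frac{1-\omega}{1+\eps}\cdot\frac{|P_t|}{z_{\cF}^*}\bigr)$. Cancelling $\Phi(0)$, taking logarithms, and using $|P_t|=t$ gives $t<\frac{(1+\eps)z_{\cF}^*}{\eps(1-\omega)}\bigl(T\ln\frac{1}{1-\eps}+\ln\frac{1}{\delta}\bigr)$. Invoking Proposition~\ref{p1} to replace $z_{\cF}^*$ by $\OPT_{\cF}$ then gives a bound of the same shape as $t_{\max}$, so the loop must terminate no later than iteration $t_{\max}$. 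The only bookkeeping subtlety is collapsing the factor $(1+\eps)$ that appears in front of $T\ln\frac{1}{1-\eps}$ and $\ln\frac{1}{\delta}$ into the tidier expression $\ln\frac{1}{\eps\delta}$ used in the lemma: one uses the crude estimate $(1+\eps)\ln\frac{1}{\delta}\le\ln\frac{1}{\delta}+\ln\frac{1}{\eps}=\ln\frac{1}{\eps\delta}$ in the natural parameter regime, and absorbs the residual $\eps\cdot T\ln\frac{1}{1-\eps}$ into the same term via the choice of $T$. This is a routine algebraic simplification; I do not expect any genuine obstacle, since all the substantive work, namely the per-iteration potential drop, has already been done in Section~\ref{sec:pot}.
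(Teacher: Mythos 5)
Your route is genuinely different from the paper's. You recycle the fractional-optimum potential bound \raf{bd-pot} from Section~\ref{sec:pot} and pair it with the lower bound $\Phi(t)>(1-\eps)^T\delta\,\Phi(0)$ valid while the loop runs; that part of your chain is correct (including $\Phi(0)=w_0(\cR)$, $|P_t|=t$, and $z^*_{\cF}\le\OPT_{\cF}$ via Proposition~\ref{p1}), and it yields $t<\frac{(1+\eps)\,z^*_{\cF}}{\eps(1-\omega)}\bigl(T\ln\frac{1}{1-\eps}+\ln\frac{1}{\delta}\bigr)$. The paper instead proves this lemma by a separate, self-contained argument: it introduces the auxiliary weights $w'_t$ (with $w'_0(R)=w_0(R)+\sum_{t'\in\cT_t(R)}w_{t'+1}(R)$ and $w'_{t+1}(R)=w'_t(R)-w_t(R)\bbone_{p_{t+1}\in R}$), uses the \emph{integral} optimum to get the per-iteration decay $w'_{t+1}(\cR_{t+1})\le\bigl(1-\frac{\eps(1-\omega)}{\OPT_{\cF}}\bigr)w'_t(\cR_t)$ (Claim~\ref{cl1-1-1}), and sandwiches $w'_t(\cR_t)$ between $(1-\eps)^Tw_0(\cR_t)$ and $w'_0(\cR)<w_0(\cR)/\eps$; this is precisely where the clean factor $\OPT_{\cF}$ (no $1+\eps$) and the term $\ln\frac{1}{\eps\delta}$ come from.

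The gap is in your final ``bookkeeping'' step, which does not go through. To land on the stated $t_{\max}$ you would need $(1+\eps)\bigl(T\ln\frac{1}{1-\eps}+\ln\frac{1}{\delta}\bigr)\le T\ln\frac{1}{1-\eps}+\ln\frac{1}{\eps\delta}$, i.e.\ $\eps T\ln\frac{1}{1-\eps}+\eps\ln\frac{1}{\delta}\le\ln\frac{1}{\eps}$. But $\eps T\ln\frac{1}{1-\eps}\ge\eps^2T$, and the choice of $T$ in \raf{eq:T} forces $\eps^2T\ge e^2\gamma\,b\,(\ln(a+e-1)+1)\gg\ln\frac{1}{\eps}$; likewise $\eps\ln\frac{1}{\delta}>\ln\frac{1}{\eps}$ exactly in the regime the paper targets ($\delta$ exponentially small). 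So what you actually prove is termination within roughly $(1+\eps)\,t_{\max}$, not $t_{\max}$ as stated. This weaker bound would still support the paper's asymptotic conclusions (a constant-factor larger $t_{\max}$ only shifts $\ln g_\cF(t_{\max})$ by an additive constant in Lemma~\ref{l-bd2}), but it is not a proof of the lemma as written. Two fixes: (i) use the paper's $w'$-argument; or (ii) observe that the factor $\frac{1}{1+\eps}$ in Lemma~\ref{lem1}/\raf{bd-pot} arises only from choosing a $(1+\eps)$-approximate fractional solution, so rerun that lemma with a $(1+\eps')$-approximate $\mu^*$ and let $\eps'\to 0$, giving $\Phi(t)\le\Phi(0)\exp\bigl(-\eps(1-\omega)\frac{t}{z^*_{\cF}}\bigr)$; then your chain gives $t<\frac{\OPT_{\cF}}{\eps(1-\omega)}\bigl(T\ln\frac{1}{1-\eps}+\ln\frac{1}{\delta}\bigr)\le t_{\max}$ with no absorption needed.
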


\begin{proof}
For a range $R\in\cR$, let us denote by $\cT_t(R):=\{0\le t'\le t-1:~p_{t'+1}\in R\in\cR_{t'}\}$ the set of time steps, upto $t$, at which $R$ was hit by the selected point $p_{t'+1}$, when it was still active. Initialize $w'_0(R):=w_0(R)+\sum_{t'\in\cT_t(R)}w_{t'+1}(R).$ For the purpose of the analysis, we will think of the following update step during the algorithm: upon choosing $p_{t+1}$, set $w'_{t+1}(R):=w_t'(R)-w_t(R)\bbone_{p_{t+1}\in R}$ for all $R\in\cR_t$. Note that the above definition implies that $w'_t(R)\ge(1-\epsilon)^{|\cT_t(R)|}w_0(R)$ for all $R\in\cR$ and for all $t$.

\begin{claim}
	\label{cl1-1-1}
	For all $t$: 
	\begin{align}
	\label{e1-1-1}
	w'_{t+1}(\cR_{t+1})&\le \left(1-\frac{\eps(1-\omega)}{\OPT_{\cF}}\right)w'_{t}(\cR_t).
	\end{align}
\end{claim}
\begin{proof}
Consider an integral optimal solution $P^*\subseteq Q$ (which is guaranteed to exist by (A2)). Then 
	\begin{align}
	\label{e1-1-1-1}
	w_t(\cR_t)&=\int_{R\in\cR_t}w_t(R)dR=w_t\left(\bigcup_{q\in P^*}\cR_t[q]\right)\le \sum_{q\in P^*}w_t(\cR_t[q]).
	\end{align}
	From \raf{e1-1-1-1} it follows that there is a $q\in P^*$ such that  $w_t(\cR_t[q])\ge\frac{w_t(\cR_t)}{\OPT_{\cF}}.$ Note that for such $q$ we have 
	\begin{align}
	\label{e1-1-2}
	\xi_{t}(q)&:=w_t\left(\cR_t[q]\right)\ge \frac{w_t(\cR_t)}{\OPT_\cF}, 
	\end{align}
    and thus by the choice of $p_{t+1}$, $\xi_t(p_{t+1})\ge(1-\omega)\xi_t(q)\ge\frac{(1-\omega)w_t(\cR_t)}{\OPT_\cF}$. It follows that
    \begin{align}\label{e1-1-3}
    w'_{t+1}(\cR_{t+1})&\le w'_{t+1}(\cR_{t})=\int_{R\in\cR_t}(w_t'(R)-w_t(R)\bbone_{p_{t+1}\in R})dR\nonumber\\
    &=\int_{R\in\cR_t}w_t'(R)dR-\int_{R\in\cR_t}w_t(R)\bbone_{p_{t+1}\in R}dR=w_t'(\cR_t)-\xi_t(p_{t+1})\nonumber\\
    &\le w_t'(\cR_t)-\frac{(1-\omega)w_t(\cR_t)}{\OPT_\cF}.
    \end{align}
    Note that, for all $t$,
    \begin{align}\label{e1-1-4}
    w'_t(R)<w_t(R)\sum_{t'\ge 0}(1-\eps)^{t'}=\frac{w_t(R)}{\eps}. 
    \end{align}
    Thus, $w_t(\cR_t)>\eps\cdot w_t'(\cR_t)$.
    Using this in \raf{e1-1-3}, we get the claim.
\end{proof}
Claim~\ref{cl1-1-1} implies that, for $t=t_{\max}$,
 \begin{align*}
 w_t'(\cR_t)\le\left(1-\frac{\eps(1-\omega)}{\OPT_\cF}\right)^tw'_{0}(\cR)<e^{-\frac{\eps(1-\omega)}{\OPT_\cF}t}w'_{0}(\cR).
 \end{align*}
 Since $|R\cap P_t|<T$ for all $R\in \cR_t$, we have $w_t'(\cR_t)=\int_{R\in\cR_t}w_t'(R)dR>(1-\eps)^Tw_0(\cR_t)$. On the other hand, \raf{e1-1-4} implies that $w_0'(\cR)<\frac{ w_0(\cR)}{\eps}$. Thus, if $w_0(\cR_t)\ge\delta\cdot w_0(\cR)$, we get
 \begin{align*}\label{e1-1-6}
  (1-\eps)^T\delta<\frac{1}{\eps}\cdot e^{-\frac{\eps(1-\omega)}{\OPT_\cF}t},
 \end{align*}
  giving $t<\frac{\OPT_\cF}{\eps(1-\omega)}\left(T\ln\frac{1}{1-\eps}+\ln\frac{1}{\eps\delta}\right)=t_{\max}$, in contradiction to $t=t_{\max}$.
\end{proof}
\subsection{Convergence to an  $(\frac{1+4\eps}{1-\omega},1-\delta)$-approximate solution}\label{sec:convergence}

\begin{lemma}\label{l-bd2} Suppose that $T\ge\frac{\max\{1,\ln(g_\cF(t_{\max})/\delta)\}}{\eps^2}$ and $\eps \leq 0.68$. Then Algorithm \ref{alg} terminates with a $(\frac{(1+5\eps)}{1-\omega},1-\delta)$-approximate solution $\widehat\mu$ for \raf{FH}.
\end{lemma}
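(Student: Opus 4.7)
My plan is to verify the two defining properties of an $(\alpha,\beta)$-approximate solution separately: the $(1-\delta)$-feasibility and the size bound $\widehat\mu(Q) \le \frac{1+5\eps}{1-\omega}z^*_\cF$.

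Feasibility will come essentially for free from the loop's exit criterion. Letting $t_f$ denote the final value of $t$, the failure of the while-loop condition at time $t_f$ gives $w_0(\cR_{t_f}) < \delta\,w_0(\cR)$, so $\cR' := \cR \setminus \cR_{t_f}$ has $w_0$-measure at least $(1-\delta)w_0(\cR)$. For any $R \in \cR'$, the inequality $|P_{t_f}\cap R| \ge T$ holds by construction, hence $\widehat\mu(R) = |P_{t_f}\cap R|/T \ge 1$.

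For the size bound, I will bound $t_f$ by combining \raf{bd-pot} with a matching lower bound on $\Phi$. The key new observation is a floor-type lower bound: every $R \in \cR_t$ satisfies $|P_t \cap R| \le T-1$, so $w_t(R) \ge (1-\eps)^{T-1}w_0(R)$; integrating yields $\Phi(t) \ge (1-\eps)^{T-1}w_0(\cR_t)$. Applying this at $t = t_f - 1$, where the loop guard still held and hence $w_0(\cR_{t_f-1}) \ge \delta\,w_0(\cR) = \delta\,\Phi(0)$, and combining with \raf{bd-pot} evaluated at the same time, will yield
\[
\delta(1-\eps)^{T-1} \;\le\; \exp\!\left(-\frac{\eps(1-\omega)(t_f-1)}{(1+\eps)\,z^*_\cF}\right).
\]
Taking logarithms and using $\ln(1/(1-\eps)) \le \eps(1+\eps)$ (valid for $\eps \le 0.68$) will give an explicit bound of the form $t_f \le \frac{(1+\eps)^2 T z^*_\cF}{1-\omega} + \frac{(1+\eps) z^*_\cF \ln(1/\delta)}{\eps(1-\omega)} + 1$. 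Dividing by $T$ and invoking the hypothesis $T \ge \ln(1/\delta)/\eps^2$ (implied by $T \ge \ln(g_\cF(t_{\max})/\delta)/\eps^2$ since $g_\cF(\cdot)\ge 1$), together with the automatic bound $z^*_\cF \ge 1$ coming from $\mu(Q) \ge \mu(R) \ge 1$ for every feasible $\mu$, absorbs each correction into an $O(\eps)\cdot z^*_\cF/(1-\omega)$ slack and delivers $\widehat\mu(Q) \le (1+5\eps)z^*_\cF/(1-\omega)$.

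The main obstacle will be the constant bookkeeping: I must verify that the $(1+\eps)^2$ factor produced by $\ln(1/(1-\eps)) \le \eps(1+\eps)$, together with the $O(\eps)$ contribution from the $\ln(1/\delta)/T$ correction, actually fit inside the advertised slack $1+5\eps$. This is exactly what pins down the numerical threshold $\eps \le 0.68$ in the hypothesis, while the additive $+1$ term is comfortably absorbed by $z^*_\cF \ge 1$.
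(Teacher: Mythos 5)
Your proof is correct in substance and follows the same overall skeleton as the paper (feasibility from the loop guard exactly as in the paper; size bound by playing a lower bound on $\Phi(t_f-1)$ against \raf{bd-pot} and dividing by $T$), but the key step is genuinely different and simpler. The paper decomposes $\Phi(t)=\sum_{P\in\cR_t|_{P_t}}(1-\eps)^{|P|}w_0(\cR_t[P])$, applies \raf{bd-pot} to each projection class, and uses pigeonhole to find a class $\widehat P$ with $w_0(\cR_t[\widehat P])\ge w_0(\cR_t)/g_\cF(t_{\max})$; this invokes assumption (A1) and is what forces the $\ln(g_\cF(t_{\max})/\delta)$ term in the hypothesis on $T$. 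You instead use the uniform bound $w_t(R)\ge(1-\eps)^{T}w_0(R)$ for every active range (note: if $T$ is not an integer you should write exponent $T$ rather than $T-1$, which changes nothing downstream), giving $\Phi(t_f-1)\ge(1-\eps)^{T}\delta\,\Phi(0)$ directly. This avoids (A1) and the pigeonhole loss entirely, and shows the lemma really only needs $T\ge\max\{1,\ln(1/\delta)\}/\eps^2$ --- a strictly weaker condition than the stated one (which of course still implies it), and one that would even allow a smaller choice of $T$ in \raf{eq:T}. The only caveat is the endpoint bookkeeping you yourself flagged: if you linearize the dominant term via $\ln\frac{1}{1-\eps}\le\eps(1+\eps)$ you end up needing $(1+\eps)^2+\eps(1+\eps)+\eps^2\le 1+5\eps$, i.e.\ $\eps\le 2/3$, which falls just short of $0.68$; keeping $\frac{1+\eps}{\eps}\ln\frac{1}{1-\eps}$ exact in that term reproduces the paper's final expression $\frac{\eps(1+\eps)}{1-\omega}+\frac{1+\eps}{\eps(1-\omega)}\ln\frac{1}{1-\eps}+\eps^2$, whose verification at exactly $\eps=0.68$ is itself only marginally satisfied (it fails by about $0.02$ as $\omega\to 0$), so this is a shared boundary-constant quibble rather than a defect of your argument.
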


\begin{proof}
Suppose that Algorithm \ref{alg} (the while-loop) terminates in iteration $t_f \le t_{\max}$.
	
\noindent{\it $(1-\delta)$-Feasibility:} By the stopping criterion, $w_0(\cR_{t_f})<\delta\cdot w_0(\cR)$. Then for $t=t_f$ and any $R\in\cR\setminus\cR_t$, we have
$\widehat\mu(R)=\frac{1}{T}\int_{q\in R}\sum_{p\in P_t}\bdelta_p(q)dq=\frac{1}{T}\sum_{p\in P_t}\int_{q\in R}\bdelta_p(q)dq=\frac{1}{T}\sum_{p\in P_t}\bbone_{p\in R}=\frac{1}{T}|P_t\cap R|\ge 1$, since $|P_t\cap R|\ge T$, for all $R\in\cR\setminus\cR_t$.

\smallskip

\noindent{\it Quality of the solution $\widehat \mu$:} By assumption (A1), we have $|\cR_t|_{P_t}|\le g(|P_t|)$, for all $t$. Thus we can write

\begin{equation}\label{discrete}
\Phi(t)=\sum_{P\in\cR_t|_{P_t}}(1-\eps)^{|P|}w_0(\cR_t[P]),
\end{equation}
where $\cR_t[P]:=\{R\in\cR_t:~R\cap P_t= P\}$. Since $\Phi(t)$ satisfies \raf{bd-pot}, we get by \raf{discrete} that 
\begin{align*}
(1-\eps)^{|P|}w_0(\cR_t[P]) 
&\leq \Phi(0) \exp\left( -{\eps}\cdot\frac{1-\omega}{1+\eps}\cdot\frac{|P_t|}{z^*_{\cF}}\right), \quad \text{ for all }P\in\cR_t|_{P_t} \\
\therefore {|P|}\ln(1-\eps)+\ln(w_0(\cR_t[P])) &\leq \ln \Phi(0) - \eps\cdot\frac{1-\omega}{1+\eps}\cdot\frac{|P_t|}{z^*_{\cF}}, \quad \text{ for all }P\in\cR_t|_{P_t}.
\end{align*}
Dividing by $\eps\cdot\frac{1-\omega}{1+\eps} \cdot T$ and rearranging, we get
\begin{align}\label{e15b}
\frac{|P_t|}{z^*_{\cF}T}  \leq  \frac{(1+\eps)\left(\ln \Phi(0) - \ln(w_0(\cR_t[P])\right)}{\eps (1-\omega)T} + \frac{(1+\eps)|P|}{\eps (1-\omega)T}\cdot\ln\frac{1}{1-\eps}, \quad \text{ for all }P\in\cR_t|_{P_t}.
\end{align}
Since
\begin{align*}
 w_0(\cR_t)=w_0\left(\bigcup_{P\in\cR_t|_{P_t}}\cR_t[P]\right)= \sum_{P\in\cR_t|_{P_t}}w_0\left(\cR_t[P]\right),
\end{align*}
there is a set $\widehat P\in\cR_t|_{P_t}$ such that $w_0(\cR_t[\widehat P])\ge\frac{w_0(\cR_t)}{|\cR_t|_{P_t}|}$.

We apply \raf{e15b} for $t=t_f-1$ and $\widehat P\in\cR_t|_{P_t}$. 
Using $\Phi(0)=w_0(\cR)\le \frac{w_0(\cR_t)}{\delta}$, $|\cR|_{P_t}|\le g_\cF(|P_t|) \le g_\cF(t_{\max})$, $\widehat\mu(Q)=\frac{|P_t|+1}{T}$, $|\widehat P|< T$ (as $\widehat P=R\cap P_t$ for some $R\in\cR_t$), $T \geq \frac{\ln(g_\cF(t_{\max})/\delta)}{\eps^2}$ and $T\ge \frac{1}{\eps^2}$ (by assumption), and $z_{\cF^*}\ge 1$, we get
\begin{align*}
\frac{\widehat\mu(Q)}{z^*_{\cF}}  	&\le \frac{(1+\eps)\ln( g_\cF(t_{\max})/\delta)}{\eps (1-\omega)T} + \frac{(1+\eps)}{\eps (1-\omega)}\cdot\ln\frac{1}{1-\eps}+\frac{1}{T\cdot z_{\cF}^*}	\\
&\le \frac{\eps(1+\eps)}{(1-\omega)} + \frac{(1+\eps)}{\eps (1-\omega)}\cdot\ln\frac{1}{1-\eps}+\eps^2<\frac{1+5\eps}{1-\omega},	
\end{align*}
for $\eps \leq 0.68$. 
\end{proof}

\subsection{Satisfying the condition on $T$} As $g_\cF(t_{\max})\le t_{\max}^\gamma$, for some constant $\gamma\ge 1$ by assumption (A1), and $t_{\max}=\frac{ T\cdot\OPT_{\cF}}{\eps(1-\omega)}\left(\ln\frac{1}{1-\eps}+\ln\frac{1}{\eps\delta}\right)$ as defined in Lemma~\ref{l-bd1}, it is enough to select $T$ to satisfy $T\ge \frac{\gamma\ln T}{\eps^2}+\frac{\gamma\ln T_0}{\eps^2}$, or  
\begin{equation}\label{T-eq}
\frac{T}{a}>\ln {T}+b, 
\end{equation}
where $T_0$, $a$,  and $b$ are given by \raf{eq:T}.

Set $T=e^C a b(\ln(a+e-1)+1)$, where $C$ is a large enough constant. Then the left-hand side of \raf{T-eq} is $\frac{T}{a}=e^C b \ln(a+e-1)+ e^C b$, while the right-hand side is $\ln T+ b=C+\ln a+\ln (\ln(a+e-1)+1) +\ln b+ b$. Now we need to choose $C\ge 1$ such that $e^C>C+3$ (say $C=2$). Then
\begin{align*}
e^C b&>2b>b+\ln b,\\
e^C  b &\ln(a+e-1) 	\geq e^C \log(a+e-1) 
>(C+3) \ln(a+e-1) \\
&\ge C+1+2 \ln(a+e-1)
>C+ \ln a+\ln (\ln(a+e-1)+1).
\end{align*}
Thus, setting $T=\Theta(\frac{\gamma}{\eps^2}\log \frac{\gamma}{\eps}\log\frac{\OPT_\cF}{\eps\delta(1-\omega)})$ satisfies the required condition on $T$.                                                                                                                               

\section{Implementation of the maximization oracle}\label{sec:max}

Let $\cF=(Q,\cR)$ be a range space with $\VCd(\cF)=d$. Recall that the maximization oracle needs to find, for a given $\omega>0$ and measure $w:\cR\to\RR_+$, a point $p \in Q$ such that $\xi_w(p)\geq (1-\omega)\max_{q\in Q}\xi_w(q)$, where $\xi_w(p):=w(\cR[p])$.

We will assume here the availability of the following oracles:                   
\begin{itemize}
\item $\SO(\cF^*,\cR')$: this is the dual subsystem oracle; given a finite subset of ranges $\cR'\subseteq\cR$, it returns the set of ranges $\cQ_{\cR'}$. Note by Lemmas~\ref{VC} and~\ref{VC2} that $|\cQ(\cR')|\le g(|\cR'|,2^{d+1})$.
\item $\PO(\cF,\cR')$: Given $\cF$ and a finite subset of ranges $\cR'\subseteq\cR$, the oracle returns a point $p\in Q$ that lies in $\cap_{R\in\cR'} R$ (if one exists).
\item $\Sample(\cF,\widehat w)$: Given $\cF=(Q,\cR)$ and a probability measure $\hat w:\cR\to\RR_+$, it samples from $\widehat w$.   
\end{itemize}

To implement the maximization oracle, we follow the approach in \cite{CEH04}, based on $\epsilon$-approximations. Recall that an $\epsilon$-approximation for $\cF^*$ is a finite subset of ranges $\cR'\subseteq\cR$, such that \raf{eps-approx}
holds for all $q\in Q$. We use $\epsilon:=\frac{\omega}{2\OPT_{\cF}}$. By Theorem~\ref{thm:eps-approx}
a random sample $\cR'$ of size $N=O(\frac{d2^{d}}{\epsilon^2}\log\frac{1}{\epsilon})=O(\frac{d2^{d}\OPT_{\cF}^{2}}{\omega^2}\log \frac{\OPT_{\cF}}{\omega})$ from $\cR$ according to the probability measure $\widehat{w}:=w/w(\cR)$, is an $\epsilon$-approximation with high probability. We call $\SO(\cF^*,\cR')$ to  obtain the set $Q(\cR')$, then return the subset of ranges $\cR'''\in\argmax_{\cR''\in Q(\cR')}|\cR''|$. Finally, we call the oracle $\PO(\cF,\cR''')$ to obtain a point $p\in\cap_{R\in\cR'''}R$.

\begin{lemma}\label{l-max}
$\xi(p) \geq (1-\omega)\max_{q\in Q}\xi(q)$.
\end{lemma}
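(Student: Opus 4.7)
The plan is to combine three observations: first, that the constructed point $p$ is a maximizer of the discrete counting function $q \mapsto |\cR'[q]|$ over $Q$; second, that the $\epsilon$-approximation property lets us transfer this to an approximate maximizer of $\xi_w$; and third, that the maximum value of $\xi_w$ is at least $w(\cR)/\OPT_{\cF}$, which will be used to absorb the additive $\epsilon$-approximation error into the multiplicative $(1-\omega)$ bound.

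First I would argue that $p$ satisfies $|\cR'[p]| = \max_{q \in Q} |\cR'[q]|$. By definition, $\cQ_{\cR'} = \{\cR'[q] : q \in Q\}$, so $\cR'''$ chosen as a largest-cardinality element of $\cQ_{\cR'}$ has $|\cR'''| = \max_{q \in Q} |\cR'[q]|$. Since $p \in \bigcap_{R \in \cR'''} R$, every $R \in \cR'''$ satisfies $p \in R$, i.e.\ $\cR''' \subseteq \cR'[p]$, and therefore $|\cR'[p]| \geq |\cR'''| = \max_{q \in Q} |\cR'[q]|$, with equality by maximality.

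Next I would apply the $\epsilon$-approximation inequality \raf{eps-approx} to $p$ and to an optimizer $q^* \in \argmax_{q \in Q} \xi_w(q)$. Writing $W := w(\cR)$, this gives
\begin{align*}
\xi_w(p) &\geq W\cdot\frac{|\cR'[p]|}{|\cR'|} - \eps W \;\geq\; W\cdot\frac{|\cR'[q^*]|}{|\cR'|} - \eps W \;\geq\; \xi_w(q^*) - 2\eps W.
\end{align*}
The first inequality uses the $\eps$-approximation in one direction at $p$, the second uses the maximality of $|\cR'[p]|$ established above, and the third uses the $\eps$-approximation in the other direction at $q^*$.

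Finally, I need to convert the additive error $2\eps W$ into the multiplicative factor $\omega$. The key fact, analogous to \raf{e1-1-1-1}, is that any integral optimum hitting set $P^*$ (guaranteed by (A2)) covers $\cR$, so $W = w\bigl(\bigcup_{q\in P^*}\cR[q]\bigr) \leq \sum_{q\in P^*}\xi_w(q) \leq \OPT_{\cF}\cdot\xi_w(q^*)$, i.e.\ $\xi_w(q^*) \geq W/\OPT_{\cF}$. With the choice $\eps = \omega/(2\OPT_{\cF})$, we obtain $2\eps W = \omega W/\OPT_{\cF} \leq \omega\,\xi_w(q^*)$, and plugging into the previous display yields $\xi_w(p) \geq (1-\omega)\,\xi_w(q^*) = (1-\omega)\max_{q\in Q}\xi_w(q)$, as required. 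The main (and only mildly subtle) step is the third one; the rest is essentially a direct unpacking of definitions, and the argument holds deterministically once we condition on the high-probability event that $\cR'$ is an $\eps$-approximation given by Theorem~\ref{thm:eps-approx}.
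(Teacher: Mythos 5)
Your proof is correct and follows essentially the same route as the paper's: use the maximality of the cell $\cR'''$ to get $|\cR'[p]|\ge|\cR'[q^*]|$, apply the $\epsilon$-approximation inequality in both directions, and absorb the additive error $2\epsilon\, w(\cR)$ using $\xi_w(q^*)\ge w(\cR)/\OPT_{\cF}$ (from (A2)) together with the choice $\epsilon=\omega/(2\OPT_{\cF})$. You merely make explicit two steps the paper states tersely (the maximality of $|\cR'[p]|$ and the justification of $w(\cR[q^*])\ge w(\cR)/\OPT_{\cF}$), which is fine.
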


\begin{proof}
The proof, which we include for completeness, goes along the same lines in \cite{CEH04}. Let $q^*$ be a point in $\argmax_{q\in Q}\xi(q)$. Note that by assumption (A2), $w(\cR[q^*])\ge \frac{1}{\OPT_{\cF}}w(\cR)$. Then by \raf{eps-approx},
\begin{align*}
\frac{w(\cR[p])}{w(\cR)}&\ge \frac{|\cR'[p]|}{|\cR'|}-\epsilon\ge\frac{|\cR'[q^*]|}{|\cR'|}-\epsilon
\ge\frac{w(\cR[q^*])}{w(\cR)}-2\epsilon\\
&=\frac{w(\cR[q^*])}{w(\cR)}-\frac{w}{\OPT_{\cF}}
\ge(1-\omega)\frac{w(\cR[q^*])}{w(\cR)}.
\end{align*} 
The statement follows.
\end{proof}
\begin{remark}
	\label{r2}
	The above implementation of the maximization oracle assumes the \emph{unit-cost} model of computation and \emph{infinite} precision arithmetic (real RAM). In some of the applications in the next section, we note that, in fact, \emph{deterministic} algorithms exist for the maximization oracle, which can be implemented in the \emph{bit-model} with \emph{finite precision}. 
\end{remark}

\section{Applications}\label{sec:app}
\subsection{Art gallery problems}\label{sec:gallery}
In the art gallery problem we are given a (non-simple) polygon $H$ with $n$ vertices and  $h$ holes, and two sets of points $G,N\subseteq H$. Two points $p,q\in H$ are said to see each other, denoted by $p\sees q$, if the line segment joining them lies inside $H$ (say, including the boundary $\partial H$). The objective is to guard all the points in $N$ using candidate guards from $G$, that is, to find a subset $G'\subseteq G$ such that for every point $q\in N$, there is a point $p\in G'$ such that $p\sees q$.

Let $Q=G$, $\cR=\{V_H(q):~q\in N\}$, where $V_H(q):=\{p\in H:~p\sees q\}$ is the {\it visibility region} of $q\in H$. For convenience, we shall consider $\cR$ as a {\it multi-set} and hence assume that ranges in $\cR$ are in {\it one-to-one correspondence} with points in $N$.
We shall see below that the range space $\cF=(Q,\cR)$ satisfies (A1)-(A4). 
\paragraph{Related work.} Valtr \cite{V98} showed that $\VCd(\cF)\le 23$ for simple polygons and $\VCd(\cF)=O(\log h)$ for polygons with $h$ holes. For simple polygons, this has been improved to $\VCd(\cF)\le 14$ by Gilbers and Klein \cite{GK14}.                                                         

If one of the sets $G$ or $N$ is an (explicitly given) discrete set, then the problem can be easily reduced to a standard \textsc{SetCover} problem. For the case when $G=V$ is the vertex set of the polygon (called {\it vertex guards}), Ghosh \cite{G87, G10} gave an $O(\log n)$-approximation algorithm that runs in time $O(n^4)$ for simple polygons (resp., in time $O(n^5)$ for non-simple polygons). This has been improved by King \cite{K13} to $O(\log\log\OPT)$-approximation in time $O(n^3)$ for simple polygons (resp.,  $O((1+\log (h+1))\log\OPT)$-approximation in time $O(n^2v^3)$ for non-simple polygons), where $\OPT$ here is the size of an optimum set of {\it vertex} guards. The main ingredient for the improvement in the approximation ratio is the fact proved by King and Kirkpatrick \cite{KK11} that there is an $\epsilon$-net, in this case and in fact more generally when $G=\partial H$ (called {\it perimeter guards}), of size $O(\frac{1}{\epsilon}\log\log \frac{1}{\epsilon})$. 
                                                                                                                                 
For the case when both $N$ and $G$ are infinite, a {\it discretization} step, which selects a candidate discrete set of guards guaranteed to contain a near optimal-set, seems to be necessary for reducing the problem to \textsc{SetCover}. Such a discretization method was given in \cite{DKDS07} that allows an $O(\log\OPT)$-approximation for simple polygons (resp., $O(\log h\log(\OPT\cdot\log h))$-approximation in non-simple polygons) in {\it pseudo-polynomial} time $\poly(n,\Delta)$, where the {\it stretch} $\Delta$ is defined as the ratio between the longest and shortest distances between vertices of $H$. However, very recently, an error in one of the claims in \cite{DKDS07} was pointed out by Bonnet and Miltzow \cite{BM16},  who also suggested another discretization procedure that results in an $O(\log\OPT_{\cF})$-randomized approximation algorithm, after making the following two assumptions
\begin{itemize}
	\item[(AG1)] vertices of the polygon have integer components, given by their binary representation; 
	\item[(AG2)] no three extensions meet in a point that is not a vertex, where an extension is a line passing through two vertices.
\end{itemize}	                                                                                                                      
Under these assumptions, it was shown in \cite{BM16} that one can use a grid $\Gamma$ of cell size $\frac{1}{D^O(1)}$ such that $\OPT_\Gamma=O(\OPT_{\cF})$, where $\OPT_{\Gamma}$ is the size of an optimum set of guards {\it restricted} to $\Gamma$, and $D$ is the diameter of the polygon. Then one can use the algorithm suggested by Efrat and Har-Peled \cite{EH06} who gave an efficient implementation of the multiplicative weight updates method of Br\"{o}nnimann and Goodrich \cite{BG95} in the case when the locations of potential guards are restricted to a dense $\Gamma$. More precisely, the authors in \cite{EH06} gave a {\it randomized} $O(\log\OPT_{\Gamma})$-approximation algorithm for simple polygons (resp., $O(\log h\log(\OPT_{\Gamma}\cdot\log h))$-approximation in non-simple polygons) in expected time $O(n\OPT_{\Gamma}^2\log\OPT_{\Gamma}\log(n\OPT_{\Gamma})\log^2 \Delta)$ (resp.,  $nh\OPT_{\Gamma}^3\polylog n\log^2 \Delta)$), where $\Delta$ here denotes the ratio between the diameter of the polygon and the grid cell size. Note that this would imply a randomized (weakly) polynomial time approximation algorithm for the {\it unrestricted} guarding case, if one cand show that for a polygon with rational description (of its vertices), there is a near optimal set which has also a rational description. While it is not clear that this is the case in general, the main result in \cite{BM16} implies that $\Delta$ can be chosen, under assumption (AG2), to be $D^{O(1)}$, which implies by (AG1) that $\log \Delta$ is linear in the maximum bit-length of a vertex coordinate. Note that, the same argument combined with Theorem~\ref{BG} in the appendix shows that one can actually obtain $O(\log z_{\cF}^*)$-approximation in randomized polynomial-time for simple polygons under assumptions (AG1) and (AG2). 

On the hardness side \cite{ESW01}, the vertex (and point) guarding problem for simple polygons is known to be APX-hard \cite{ESW01}, while the problem for non-simple polygons is as hard as \textsc{SetCover} and hence cannot be approximated by a polynomial time algorithm with ratio less than $((1-\epsilon)/12)\log n)$, for any $\epsilon>0$, unless $\text{\it NP}\subseteq\text{\it TIME}(n^{O(\log\log n)})$.

\subsubsection{Point guards} \label{sec:pg}
In this case, we have $Q\leftrightarrow \cR\leftrightarrow G=N=H$. Note that (A1) is satisfied with $\gamma=\VCd(\cF)\le 14$ by Lemma~\ref{VC} and the result of \cite{GK14}.
It is also known (see, e.g., \cite{EH06}) that a subsystem oracle as in (A1$'$) can be computed efficiently, for any (finite) $P\subset Q$, as follows. Let $\cR':=\{V_H(p):~p\in P\}$. Then $\cR'$ is a finite set of polygons which induces an arrangement of lines (in $\RR^2$) of total complexity $O(nh|P|^2)$. We can construct this arrangement in time $O(nh|P|^2\log(nh|P|))$, and label each cell of the arrangement by the set of visibility polygons it is contained in. Then $\cR|_P$ is the set of different cell labels which can be obtained, for e.g., by a sweep algorithm in time $O(nh|P|^2\log(nh|P|))$.

(A2) follows immediately from the fact that each point in the polygon is seen from some vertex. 
(A3) is satisfied if we use $w_0\equiv 1$ to be the area measure over $H$ (recall that ranges in $\cR$ are in one-to-one correspondence with points in $H$). Thus we obtain the following result from Corollary~\ref{cor-main}, in the {\it unit-cost} model, since for any $\cR'\subseteq \cR$, $|Q|_{\cR'}|\le nh|\cR|^2$ and hence $g_{\cF^*}(r)\le nhr^2$.  

\begin{corollary}\label{cor3-}
	Given a polygon $H$ with $n$ vertices and $h$ holes and $\delta>0$, there is a randomized algorithm that finds in $O(nh^3\OPT_\cF^5\log\frac{\OPT_\cF}{\delta}\log^2\OPT_\cF\log^2(h+2))$ time a set of points in $H$ of size $O(z_\cF^*\log z_\cF^*\log(h+2))$ guarding at least $(1-\delta)$ of the area of $H$, where $z_\cF^*$ is the value of the optimal fractional solution. 
\end{corollary}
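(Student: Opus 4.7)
The plan is to instantiate the randomized algorithm of Corollary~\ref{cor-main} on the range space $\cF=(H,\{V_H(q):q\in H\})$; the bulk of the work is verifying the hypotheses and tracking the parameters, all of which are prepared in Section~\ref{sec:pg}.

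First I would verify (A1)--(A3). By Valtr's bound, $d:=\VCd(\cF)=O(\log(h+2))$, so the Sauer--Shelah lemma gives (A1) with $\gamma=d$. Assumption (A2) follows since the $n$ vertices of $H$ already form an integer hitting set, so $\OPT_\cF\le n$. Assumption (A3) holds with $w_0$ equal to the area measure on $H$ (recall that ranges in $\cR$ are identified with points of $H$). To bound the dual complexity, I would record the following: for any finite $\cR'\subseteq\cR$, the $|\cR'|$ visibility polygons (each of combinatorial complexity $O(nh)$) induce a planar arrangement with at most $O(nh|\cR'|^2)$ cells, and any two points in the same cell determine the same set in $\cQ_{\cR'}$; hence $g_{\cF^*}(r)\le nhr^2$.

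Next I would instantiate the three oracles required by Section~\ref{sec:max}. A sweep builds the above arrangement in time $O(nh|\cR'|^2\log(nh|\cR'|))$ and labels each cell with the subset of visibility polygons containing it; this simultaneously realises the dual subsystem oracle $\SO(\cF^*,\cR')$ (read off distinct labels) and the point-in-region oracle $\PO(\cF,\cR''')$ (return any interior point of the chosen cell), while sampling from $\widehat w=w/w(\cR)$ is direct since $w$ has finite support throughout the algorithm. Fixing $\eps$ to a small constant, Corollary~\ref{cor-main} then yields a hitting set of size $O(d\cdot z_\cF^*\log(d\cdot z_\cF^*))=O(z_\cF^*\log z_\cF^*\log(h+2))$, as claimed. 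For the running time, the $\epsilon$-approximation sample has size $N=O(d\,2^d\,\OPT_\cF^2\log\OPT_\cF)$, and using $2^d=\poly(h+2)$ together with $g_{\cF^*}(N)\le nhN^2$, each oracle call costs $O(nh^3\,\OPT_\cF^4\log^2\OPT_\cF\log^2(h+2))$; multiplying by the $K=O(\OPT_\cF\log(\OPT_\cF/\delta)\log(h+2))$ outer iterations from Corollary~\ref{cor-main} reproduces the claimed overall bound.

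The only non-routine ingredient is the arrangement bound $g_{\cF^*}(r)\le nhr^2$, which is what keeps the dual $\epsilon$-approximation step (and therefore the maximization oracle) polynomial in $h$ rather than exponential through the factor $2^d$; once that bound is in hand, the rest of the proof is a mechanical substitution of the point-guard parameters into Corollary~\ref{cor-main}.
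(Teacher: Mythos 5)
Your proposal follows essentially the same route as the paper's: verify (A1)--(A4) for the point-guard range space (Valtr's $O(\log h)$ VC-dimension bound, $\OPT_\cF\le n$ via the vertices, the area measure for $w_0$, and the subsystem/dual oracles via the visibility-polygon arrangement), bound the dual shatter function by $g_{\cF^*}(r)\le nhr^2$, and substitute into Corollary~\ref{cor-main} with the randomized $\epsilon$-approximation-based maximization oracle of Section~\ref{sec:max}. One small correction to your justification of the sampling oracle: the measure $w_t$ on $\cR_t$ does \emph{not} have finite support (it is the area measure reweighted by $(1-\eps)^{|P_t\cap R|}$), but it is piecewise constant on the cells of the arrangement you construct, so $\Sample$ is implemented by choosing a cell with probability proportional to its reweighted area and then a uniform point in that cell --- exactly the machinery you already have in hand.
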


We obtain next a deterministic version of Corollary~\ref{cor3-} in the {\it bit model} of computation.
 
\paragraph{A deterministic maximization oracle.}~ 
We assume that the components of the vertices have rational representation, with maximum bit-length $L$ for each component  (i.e., essentially satisfy (AG1)). 
   
In a given iteration $t$ of Algorithm~\ref{alg}, we are given an active subset of ranges $\cR_t\subseteq\cR$, determined by the current set of chosen points $P_t\subseteq Q$, and the current measure $w_t:\cR_t\to\RR_+$, given by $ w_t(R)=(1-\eps)^{|P_t\cap R|}w_0(R)$, for $R\in\cR_t$, where $w_t(\cR_t)\ge\delta\cdot w_0(\cR)$. Let $\cR'_t:=\{V_H(p):~p\in P_t\}$. Note that, as explained above, the set of (convex) cells induced by $\cR_t'$ over $H$ has complexity (say number of edges) $r_t:=O(nh|P_t|^2)$ and can be computed in time\footnote{For simplicity of presentation, we do not attempt here to optimize the running time, for instance, by maintaining a data structure for computing $\cR_t'$, which can be efficiently updated when a new point is added to $P_t$.} $O(nh|P_t|^2\log(nh|P_t|))$; let us call this set $\cells(\cR_t')$, and for any $P\in\cells(\cR_t')$, define $\deg_t(P):=|\{p\in P_t:~p\sees q\text{ for some } q\in P\}|$ (recall that all points in $P$ are equivalent w.r.t. visibility from $P_t$). Note that (the subset of $H$ corresponding to) $\cR_t$ can be computed as $\cR_t=\bigcup_{P\in\cells(\cR_t'):\deg_t(P)<T}P$. We can write $\xi_t(q):=w_t(\cR_t[q])$ for any $q\in Q=H$ as
\begin{equation}\label{vis-wt}                                                                                                                     
\xi_t(q)=\sum_{P\in \cells(\cR_t')}(1-\eps)^{\deg_t(P)}\area(V_H(q)\cap P).
\end{equation}
Now, to find the point $q$ in $H$ maximizing $\xi_t(q)$, we follow\footnote{It should be noted that an FPTAS was claimed in \cite{NT94} when $w_t\equiv 1$, but this claim was not substantiated with a rigorous proof. In fact one of the statements leading to this claim does not seem to be correct, namely that the visibility region of the maximizer $q^*$ can be covered by a {\it constant} number of points that can be described only in terms of the input description of the polygon.} \cite{NT94} in expressing $\xi_t(q)$ as a (non-linear) continuous function of two variables, namely, the $x$ and $y$-coordinates of $q$. To do this, we first construct the partition $\{Q_1,\ldots,Q_l\}$ of $H$, induced by the arrangement of lines formed by the union $\cV$ of the vertices of $H$ and the vertices of $\cells(\cR_t')$. Note that for any convex cell $Q_i$ in this partition, any two points in $Q_i$ are equivalent w.r.t. the visibility of points from $\cV$. Moreover, for any pair of vertices $p,p'$ of a cell $P\in\cells(\cR_t')$, any two points in $Q_i$ lie on the same side of the line through $p$ and $p'$. This implies that, for any point $q=(x,y)\in Q_i\subseteq\RR^2$, the set $V_H(q)\cap P$
can be decomposed into at most $|E|=r_t$ regions that are either convex quadrilaterals or triangles, where $E$ is the set of edges of $\cells(\cR_t')$; see Figure~\ref{f1} for an illustration. Using the notation in the figure, we can write the vertices of the quadrilateral $Z$ in counterclockwise order as $q_i=(\frac{a_i(x,y)}{b_i(x,y)},\frac{c_i(x,y)}{d_i(x,y)})$, for $i=1,\ldots,4$, where $a_i(x,y),b_i(x,y), c_i(x,y)$, and $d_i(x,y)$ are affine functions of the form $Ax+By+C$, for some constants $A,B,C\in\QQ$ which are multi-linear of degree at most $3$ in the components of some of the vertices of $P$ and $H$. By the {\it Shoelace formula}, we can further write the area of $Z$ as   
\begin{equation}\label{areaZ}
\area(Z)=\frac{1}{2}\sum_{i=1}^4\frac{a_i(x,y)}{b_i(x,y)}\left(\frac{c_{i+1}(x,y)}{d_{i+1}(x,y)}-\frac{c_{i-1}(x,y)}{d_{i-1}(x,y)}\right),
\end{equation}
where indices wrap-around from $1$ to $4$. By considering a triangulation of $Q_i$, and letting $\Delta$ be the triangle containing $q\in Q_i$, we can write $q=(x,y)=\lambda_1(x',y')+\lambda_2(x'',y'')+(1-\lambda_1-\lambda_2)(x''',y''')$, where $(x',y'),(x',y')$ and $(x''',y''')$ are the vertices of $\Delta$, and $\lambda_1,\lambda_2\in[0,1]$.  
\begin{figure}	
	\centering  
\begin{subfigure}{.4\textwidth}
	\includegraphics[width=2.5in]{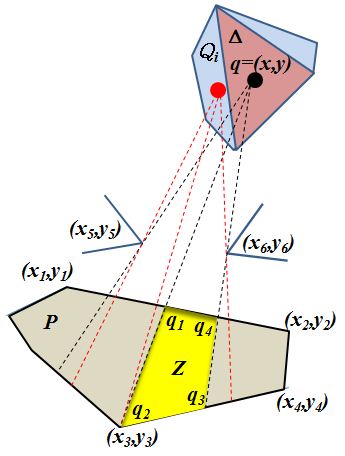}
	\caption{The visibility region of $q \in\Delta\subseteq Q_i$ inside $P$ can be decomposed into two convex quadrilaterals; one of them is $Z$, which is determined by $q=(x,y)$, the edges $\{(x_1,y_1),(x_2,y_2)\}$ and $\{(x_3,y_3),(x_4,y_4)\}$ of $P$, and the vertex $(x_6,y_6)$ of the polygon $H$. 
	}                                                                                                                                           
	\label{f1}
	\end{subfigure}
\begin{subfigure}{.4\textwidth}
	\includegraphics[width=2.5in]{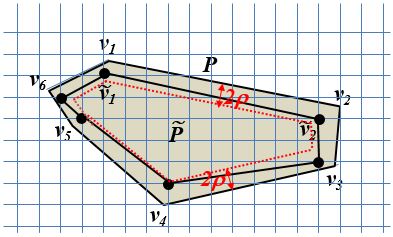}
\caption{The polygon $P$ and its internal approximation $\widetilde P$ using the vertices of the grid. Note that, since for all vertices $v$ of $P$ the distance between $v$ and $\widetilde v$ is at most $\rho$, the difference $P\setminus\widetilde P$ is covered by the region between the $\partial P$ and the dotted red polygon, which is at distance $\rho$ from the boundary of $P$} 
\label{f2}
\end{subfigure}
\hspace{0.5in}
\end{figure}
It follows from \raf{vis-wt} and \raf{areaZ} that $\xi_t(q)$ can be written as     \begin{equation}\label{vis-wt2}                                                                                                                                                         
\xi_t(q)=\xi_t(\lambda_1,\lambda_2)=\sum_{i=1}^k(1-\eps)^{j_i}\frac{M_i(\lambda_1,\lambda_2)}{N_i(\lambda_1,\lambda_2)},
\end{equation}     
where
 $k=O(|E|)=O(r_t)=\poly(n,h,\log\frac{1}{\delta})$, $j_i\le|P_t|\le t_{\max}=\poly(n,h,\log\frac{1}{\delta})$, and $M_i(\lambda_1,\lambda_2)$ and $N_i(\lambda_1,\lambda_2)$ are quadratic functions of $\lambda_1$ and $\lambda_2$ with coefficients having bit-length $O(L')$, where $L'$ is the maximum bit length needed to represent the components of the vertices in $\cV$. We can maximize $\xi_t(\lambda_1,\lambda_2)$ over $\lambda_1,\lambda_2\in[0,1]$
 by considering $9$ cases, corresponding to $\lambda_1\in\{0,1\}$, and $\lambda_1\in(0,1)$; and $\lambda_2\in\{0,1\}$, and $\lambda_2\in(0,1)$, and taking the value that maximizes $\xi_t(\lambda_1,\lambda_2)$ among them. Consider w.l.o.g. the case when $\lambda_1,\lambda_2\in(0,1)$. We can maximize $\xi_t(\lambda_1,\lambda_2)$ by setting the gradient of \raf{vis-wt2} to $0$, which in turn reduces to solving a system of two polynomial equations of degree $O(k)$ in two variables. A rational approximation to the solution $(\lambda_1^*,\lambda_2^*)$ of this system to within an additive accuracy of $\tau$ can be computed in time and bit complexity $\poly(L',k,\log\frac{1}{\tau})$, using, e.g., the {\em quantifier elimination algorithm} of Renegar \cite{R92}; see also Basu et al. \cite{BPR96} and Grigor'ev and Vorobjov \cite{GV88}. 
 \begin{claim}\label{cl4}
 	The function $\xi_t(\lambda_1,\lambda_2)$ in \raf{vis-wt2} is $ 2^{O(kL')}$-Lipschitz\footnote{A continuous differentiable function $f:S\to\RR$ is $\tau$-Lipschitz over $S\subseteq \RR^n$ if $|f(y)-f(x)|\le \tau\|x-y\|_2$ for all $x,y\in S$.}.                                                                                                                                                                                                           
 \end{claim}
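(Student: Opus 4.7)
The plan is to bound $\|\nabla \xi_t(\lambda_1,\lambda_2)\|$ pointwise on $\Delta$ and then integrate along segments to obtain the Lipschitz constant. I would first consolidate \raf{vis-wt2} into a single rational function $\xi_t=M/N$ by putting all $k$ terms over the common denominator $N:=\prod_{i=1}^{k}N_i$. Since each $M_i$ and $N_i$ is a quadratic polynomial in $(\lambda_1,\lambda_2)$ with rational coefficients of bit-length $O(L')$, the numerator $M=\sum_i(1-\eps)^{j_i}M_i\prod_{j\neq i}N_j$ and denominator $N$ are polynomials of degree $O(k)$ whose rational coefficients have bit-length $O(kL')$ (the factors $(1-\eps)^{j_i}\in[0,1]$ with $j_i\le t_{\max}=\poly(n,h,\log\frac{1}{\delta})$ contribute only an additive polynomial overhead, absorbed into $O(kL')$).

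Next, evaluating on the compact set $[0,1]^{2}\supseteq\Delta$ yields the pointwise upper bounds $|M|,|\nabla M|,|N|,|\nabla N|\le 2^{O(kL')}$ by the trivial magnitude estimate for polynomials of bounded degree and bit-length at bounded arguments. Applying the quotient rule then gives
\[
\|\nabla \xi_t(\lambda_1,\lambda_2)\| = \left\|\frac{N\nabla M - M\nabla N}{N^{2}}\right\| \;\le\; \frac{2\cdot 2^{O(kL')}\cdot 2^{O(kL')}}{\min_{(\lambda_1,\lambda_2)\in\Delta}|N(\lambda_1,\lambda_2)|^{2}}.
\]
Hence it suffices to prove the uniform lower bound $\min_{\Delta}|N|\ge 2^{-O(kL')}$, after which $\|\nabla \xi_t\|\le 2^{O(kL')}$ on $\Delta$ and integrating along any segment in $\Delta$ yields the claimed $2^{O(kL')}$-Lipschitz estimate.

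The main obstacle is this uniform lower bound on $|N|$. The qualitative step is geometric: each factor $N_i$ encodes a non-degeneracy condition for the quadrilateral/triangle decomposition of $V_H(q)\cap P$ (typically, that a ray from $q$ through a vertex in $\cV$ is not parallel to a relevant edge, or that $q$ is not collinear with two specific vertices), and throughout the cell $\Delta$ the combinatorial visibility structure is fixed, so no $N_i$ vanishes on $\Delta$. To turn this qualitative fact into a quantitative estimate, one invokes a polynomial root-separation (``gap'') bound: since $\Delta$ is a compact semialgebraic set cut out by linear inequalities of bit-length $O(L')$ and $N$ is a non-vanishing polynomial on $\Delta$ with rational coefficients of bit-length $O(kL')$, a \L{}ojasiewicz-type inequality (cf.\ Canny's Gap Theorem or its standard quantitative variants from real algebraic geometry) yields $\min_{\Delta}|N|\ge 2^{-O(kL')}$. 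Carrying out this step carefully --- in particular, verifying for each $N_i$ which lines of the arrangement through $\cV$ keep $\Delta$ quantitatively separated from the locus $\{N_i=0\}$ --- is the principal piece of technical work.
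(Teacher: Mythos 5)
Your overall skeleton --- write each gradient component as a rational function, upper-bound the numerator trivially on the compact cell, lower-bound the denominator, and integrate along segments --- is the same as the paper's. However, the step you explicitly defer as ``the principal piece of technical work,'' namely the uniform lower bound $\min_{\Delta}|N|\ge 2^{-O(kL')}$, is precisely the substance of the claim, and the route you propose does not obviously deliver it. Generic \L{}ojasiewicz/Canny-type gap bounds for a polynomial of degree $O(k)$ with coefficients of bit-length $O(kL')$ that is positive on a semialgebraic cell carry extra polynomial factors of the degree in the exponent; in two variables they give minima of order $2^{-L'k^{O(1)}}$ (e.g.\ $2^{-O(k^2L')}$), which would only yield a $2^{\poly(k)L'}$-Lipschitz bound, not the claimed $2^{O(kL')}$ (the weaker bound would still keep $\log\frac{1}{\tau}$ polynomial for the algorithm, but it does not prove the claim as stated). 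So, as written, the proposal has a genuine gap at its central step.

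The paper closes this step elementarily by exploiting structure that you discard when you clear denominators into a generic degree-$O(k)$ polynomial: by \raf{areaZ} and \raf{vis-wt2}, the denominator of each gradient component factors into $k$ terms of the form $b(\lambda_1,\lambda_2)^2d(\lambda_1,\lambda_2)^2$ where $b$ and $d$ are \emph{affine} functions $A\lambda_1+B\lambda_2+C$ with $A,B,C\in\QQ$ of bit-length $O(L')$, which may be assumed strictly positive on the cell. An affine function attains its minimum over $[0,1]^2$ at a vertex, i.e.\ at one of $C$, $A+C$, $B+C$, $A+B+C$; a strictly positive rational of bit-length $O(L')$ is at least $2^{-O(L')}$, so each factor is at least $2^{-O(L')}$ and the product of the $O(k)$ factors is at least $2^{-O(kL')}$. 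No root-separation machinery and no quantitative analysis of the distance from $\Delta$ to the zero loci of the $N_i$ is needed. If you retain the factorization into affine factors instead of treating $N$ as an unstructured polynomial, your argument goes through with this per-factor bound and matches the claimed constant.
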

 \begin{proof}              
It is enough to show that $\|\nabla\xi_t(\lambda_1,\lambda_2)\|_2\le 2^{O(kL')}.$ By \raf{vis-wt2}, each component of $\nabla\xi_t(\lambda_1,\lambda_2)$ is of the form $\frac{M(\lambda_1,\lambda_2)}{N(\lambda_1,\lambda_2)}$, where $M(\cdot,\cdot)$ and $N(\cdot,\cdot)$ are polynomials in $\lambda_1,\lambda_2\in[0,1]$ of degree $O(k)$ and coefficients of maximum bit length $O(kL')$. Thus $|M(\lambda_1,\lambda_2)|\le 2^{O(kL')}$. Also, from \raf{areaZ} and \raf{vis-wt2}, $N(\lambda_1,\lambda_2)$ can be written as a product of $k$ factors of the form $b(\lambda_1,\lambda_2)^2d(\lambda_1,\lambda_2)^2$, where $b(\cdot,\cdot)$ and $d(\cdot,\cdot)$ can be assumed to be {\it strictly positive} affine functions of $\lambda_1$ and $\lambda_2$. Suppose $b(\lambda_1,\lambda_2)=A\lambda_1+B\lambda_2+C$, for some constants $A,B,C\in\QQ$ which have bit length $O(L')$. Since the minimum of $b(\lambda_1,\lambda_2)$ over $\lambda_1,\lambda_2\in[0,1]$ is attained at some $\lambda_1,\lambda_2\in\{0,1\}$, it follows that $b(\lambda_1,\lambda_2)\ge\min\{C,A+C,B+C,A+B+C\}\ge\frac{1}{2^{O(L')}}$. A similar observation can be made for $d(\cdot,\cdot)$ and implies that $N(\lambda_1,\lambda_2)\ge\frac{1}{2^{O(kL')}}$, which in turn implies the claim.       	
 \end{proof}                                             
 Let $q^*_\Delta\in\argmax_{q\in\Delta}\xi_t(q)$. By the above claim, we can choose $\tau=\epsilon 2^{-O(kL')}$ sufficiently small, to get a point $q_\Delta\in\Delta$ such that $\xi_t(q_\Delta)\ge \xi_t(q^*_\Delta)- \epsilon$, where  $\epsilon:=\frac{\omega\cdot w_t(\cR_t)}{\OPT_{\cF}}\ge\frac{\omega \delta (1-\eps)^Tw_0(\cR)}{\OPT_{\cF}}$ (and hence $\log\frac{1}{\tau}=\poly(k,L',\log\frac{1}{\delta})$). Finally, we let $p\in\argmax_{\Delta} \xi_t(q_{\Delta})$, where $\Delta$ ranges over all triangles in the triangulations of $Q_1,\ldots, Q_l$, to get
 \begin{align*}
 \xi_t(p)\ge\max_{q\in Q}\xi_t(q)-\epsilon=\max_{q\in Q}\xi_t(q)-\frac{\omega\cdot w_t(\cR_t)}{\OPT_{\cF}}\ge(1-\omega)\max_{q\in Q}\xi_t(q),
 \end{align*}                                            
 where the last inequality follows from $\max_{q\in Q}\xi_t(q)\ge \frac{w_t(\cR_t)}{\OPT_{\cF}}$, implied by (A2).
                                                                                    
 \medskip
 
\noindent{\it\bf Rounding.}~ A technical hurdle in the above implementation of the maximization oracle is that the required bit length may grow from one iteration to the next (since the approximate maximizer $p$ above has bit length $\poly(k,L',\log\frac{1}{\tau})$), resulting in an exponential blow-up in the bit length needed for the computation.  To deal with this issue, we need to round the set $\cR_t$ in each iteration so that the total bit length in all iterations remains bounded by a polynomial in the input size\footnote{This is somewhat similar to the rounding step typically applied in numerical analysis to ensure that the intermediate numbers used during the computation have finite precision.}. This can be done as follows. Recall that $\cR_t$ can be decomposed by the current set of points $P_t$ into a set $\cells(\cR_t)$ of $r_t:=Cnh|P_t|^2$ disjoint convex polygons, for some constant $C>0$. Let $t_{\max}$ be the upper bound on the number of iterations given in Lemma~\ref{l-bd1}, and set $r_{\max}:=Cnht_{\max}^2$. We consider an infinite grid $\Gamma$ in the plane of cell size $\rho=\frac{\delta\cdot\area(\cR)}{16 D t_{\max} r_{\max}}$, where $D$ is the diameter of $H$ (which has bit length bounded by $O(L)$).

Let us call a cell $P\in\cells(\cR_t)$ {\it large} if $\area(P)\ge\frac{\delta\cdot\area(\cR)}{4r_{t}t_{\max}}$, and {\it small} otherwise. Let $\cL_t$ be the set of large cells in iteration $t$ of the algorithm. For each $P\in\cL_t$ we define an approximate polygon $\widetilde P\subseteq P$ as follows: for each vertex $v$ of $P$, we find a point $\widetilde v$ in $\Gamma\cap H$, closest to it, then define $\widetilde P:=\conv\{\widetilde v:~v \text{ is a vertex of $P$}\}$. Now, we let $\widetilde{\cR}_t:=\bigcup_{P\in\cL_t}\widetilde P$.
The following claim states that the total fraction of ranges that might not be covered due to this approximation is no more than $\delta/2$.
\begin{claim}\label{cl3}                                 
	$\sum_{t=1}^{t_f-1}\area(\cR_t\setminus\widetilde{\cR}_t)\le\frac{\delta}{2}\area(\cR)$.
\end{claim}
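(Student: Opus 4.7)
My approach is to bound $\area(\cR_t\setminus\widetilde{\cR}_t)$ separately for each iteration $t$ and then sum over $t=1,\ldots,t_f-1\le t_{\max}$. The plan rests on the fact that $\cR_t\setminus\widetilde{\cR}_t$ decomposes into two disjoint pieces: (i) the union of all small cells of $\cells(\cR_t)$, which are dropped entirely; and (ii) the ``boundary strips'' $P\setminus\widetilde P$ for each large cell $P\in\cL_t$. I will bound these two contributions independently and show each is $O(\delta\cdot\area(\cR)/t_{\max})$, so that summing over the at most $t_{\max}$ iterations yields a total of $O(\delta\cdot\area(\cR))$.

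For (i), each small cell has area strictly below the threshold $\frac{\delta\cdot\area(\cR)}{4 r_t t_{\max}}$ by definition, and the arrangement has at most $r_t$ cells (since $r_t$ is its edge-complexity). Thus the total area contributed by small cells in iteration $t$ is at most $\frac{\delta\cdot\area(\cR)}{4 t_{\max}}$, immediately.

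For (ii), the central geometric observation is that when each vertex $v$ of a convex polygon $P$ is replaced by a grid point $\widetilde v$ at distance at most $\rho$, the difference $P\setminus\widetilde P$ is contained in the $\rho$-neighborhood of $\partial P$ inside $P$, and by convexity this neighborhood has area at most $\rho\cdot\text{perim}(P)$. I would then sum these perimeters over $P\in\cL_t$; since every edge of $\cells(\cR_t')$ is shared by at most two cells and has length at most the polygon diameter $D$, the sum is at most $2 r_t D\le 2 r_{\max} D$. Substituting the prescribed $\rho=\frac{\delta\cdot\area(\cR)}{16 D t_{\max} r_{\max}}$ gives a strip contribution of at most $\frac{\delta\cdot\area(\cR)}{8 t_{\max}}$.

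Combining (i) and (ii) yields $\area(\cR_t\setminus\widetilde{\cR}_t)\le \frac{3\delta\cdot\area(\cR)}{8 t_{\max}}$, and summing over $t=1,\ldots,t_f-1\le t_{\max}$ produces $\frac{3\delta}{8}\area(\cR)\le\frac{\delta}{2}\area(\cR)$, as required. The step I expect to be the main obstacle is rigorously justifying the inclusion $\widetilde P\subseteq P$ together with the bound $\|v-\widetilde v\|\le\rho$: a generic nearest-grid-point rounding need not land inside $P$, so I would select, for each vertex, the nearest grid point lying inside $P$ (or, equivalently, shrink $\rho$ by a constant factor to absorb the $\sqrt{2}$ in the Euclidean-to-$\ell_\infty$ conversion). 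This is essentially routine but needs to be spelled out, since the whole bound on the strip area hinges on it.
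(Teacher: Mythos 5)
Your proposal is correct and follows essentially the same route as the paper: the same split into dropped small cells (bounded by the area threshold times the at most $r_t$ cells per iteration) and boundary strips of large cells (bounded via $\rho\cdot\mathrm{perim}$ and the fact that the perimeters sum to at most $2r_tD$), with the same choice of $\rho$; your constants come out slightly tighter only because you use the $\rho$-neighborhood of $\partial P$ where the paper conservatively uses $2\rho\cdot\prem(P)$. The caveat you flag about forcing $\widetilde v\in P$ (rather than just a nearest grid point in $H$) is a legitimate detail that the paper itself glosses over, so it does not separate your argument from theirs.
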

\begin{proof}
	Two sets contribute to the difference $\cR_t\setminus\widetilde{\cR}_t$: the set of small cells, and the truncated parts of the larges cells $\bigcup_{P\in\cL_t}P\setminus\widetilde{P}$.
	Note that the total area of the small cells is at most $ \sum_{t=1}^{t_f-1}r_t\cdot\frac{\delta\cdot\area(\cR)}{4r_{t}t_{\max}}<\frac{\delta}{4}\cdot\area(\cR)$. 
	On the other hand, for any $P\in\cL_t$, we have $\area(P)-\area(\widetilde P)\le 2\rho\cdot\prem(P)$, where $\prem(P)$ is the length of the perimeter of $P$. This inequality holds because $P\setminus\widetilde P$ is contained in the region at distance $2\rho$ from the boundary of $P$; see Figure \ref{f2} for an illustration. It follows that
	\begin{align*}
     \sum_{t=1}^{t_f-1}\sum_{P\in\cL_t}\area(P\setminus\widetilde P)\le 2\rho\cdot\sum_{t=1}^{t_f-1}\sum_{P\in\cL_t}\prem(P)\le 4\rho\cdot \sum_{t=1}^{t_f-1}r_t D < 4\rho t_f r_{t_f}D\leq\frac{\delta}{4}\cdot\area(\cR),
	\end{align*}
	by our selection of $\rho$. The claim follows.
\end{proof}
The only change we need in Algorithm~\ref{alg} is to replace $\cR_t$ in  by $\widetilde{\cR}_t$. (It is easy to see that the analysis also goes through with almost no change; we just have to replace $\cR_t$ by $\widetilde{\cR}_t$ and $\delta$ by $\frac{\delta}{2}$.)   

Note now that, since the polygon is contained in a square of size $2D$, the total number of points in $\Gamma$ we need to consider is at most $$\frac{2D}{\rho}=\frac{32 D^2t_{\max}r_{\max}}{\delta\cdot\area(H)}=2^{O(L)}\poly(n,h,\frac{1}{\delta}),$$ 
and thus the number of bits needed to represent each point of $\Gamma$ is $L\cdot\polylog(n,h,\frac{1}{\delta})$. 
Since the vertices of each cell $\widetilde{P}$ lie on the grid, the bit length $L'$ used in the computations above (in the implementation of the maximization oracle) and the overall running time is $\poly(L,n,h,\log\frac{1}{\delta})$.

\begin{corollary}\label{cor3}
Given a simple polygon $H$ with $n$ vertices with rational representation of maximum bit-length $L$ and $\delta>0$, there is a deterministic algorithm that finds in $\poly(L,n,\log\frac{1}{\delta})$ time a set of points in $H$ of size $O(z_\cF^*\log z_\cF^*\log(h+2))$ and bit complexity $\poly(L,n,\log\frac{1}{\delta})$ guarding at least $(1-\delta)$ of the area of $H$, where $z_\cF^*$ is the value of the optimal fractional solution. 
\end{corollary}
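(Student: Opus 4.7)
The plan is to invoke the Main Theorem (Theorem~\ref{t-main2}) on the visibility range space $\cF=(H,\cR)$, with $\cR=\{V_H(q):q\in H\}$ and $w_0$ the Lebesgue area on $H$, and to supply the two missing deterministic ingredients: a maximization oracle that runs in the bit model with polynomially bounded precision, and the deterministic $\epsilon$-net finder of Theorem~\ref{det-net} for the final rounding. Assumption (A1) holds with $\gamma\le 14$ by the Gilbers--Klein bound $\VCd(\cF)\le 14$ for simple polygons combined with Lemma~\ref{VC}; (A1$'$) is realized by the visibility-arrangement subsystem oracle recalled in Section~\ref{sec:pg}; (A2) is immediate since the $n$ vertices of $H$ already guard $H$, so $\OPT_\cF\le n$; and (A3) is trivial.

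The main obstacle is (A4). Following the analysis in Section~\ref{sec:pg} I would, on each iteration $t$, partition $H$ by the arrangement generated by $\cR_t'=\{V_H(p):p\in P_t\}$ together with the lines through pairs of vertices; on each resulting triangle $\Delta$ the objective $\xi_t(q)$ takes the closed-form sum of weighted rational functions of the barycentric coordinates $(\lambda_1,\lambda_2)$ given in~\raf{vis-wt2}. I would then apply Renegar's quantifier-elimination algorithm to the gradient-vanishing polynomial system on each $\Delta$ (and on each of its edges and vertices) to approximate the maximizer to additive accuracy $\tau$, using the Lipschitz bound of Claim~\ref{cl4} to translate $\tau$ in the argument into additive accuracy in $\xi_t$. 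Choosing $\tau$ so that this additive error is at most $\omega\cdot w_t(\cR_t)/\OPT_\cF$ turns the output into the $(1-\omega)$-multiplicative guarantee required by (A4), since (A2) already forces $\max_q \xi_t(q)\ge w_t(\cR_t)/\OPT_\cF$.

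The serious technical subtlety is preventing the bit length used by Renegar's algorithm from growing across iterations: each oracle call produces points whose coordinates have bit length polynomial in the bit length of the current geometry, and iterating this for $K$ rounds would blow it up exponentially. I would resolve this exactly as in Section~\ref{sec:pg}: fix in advance a grid $\Gamma\cap H$ of spacing $\rho=\Theta(\delta\cdot\area(H)/(D\,t_{\max}r_{\max}))$, replace each large cell $P\in\cells(\cR_t)$ by its snapped-to-$\Gamma$ inner approximation $\widetilde P$, and run Algorithm~\ref{alg} with $\cR_t$ replaced by $\widetilde{\cR}_t$ and $\delta$ replaced by $\delta/2$. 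By Claim~\ref{cl3} the total area discarded across all iterations is at most $(\delta/2)\area(H)$, while $\Gamma$ contains only $2^{O(L)}\poly(n,1/\delta)$ points so every geometric primitive manipulated by the oracle admits an $L\cdot\polylog(n,1/\delta)$-bit representation. This freezes the precision required by the Renegar step at $\poly(L,n,\log(1/\delta))$ per call.

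Having obtained from Algorithm~\ref{alg} a fractional solution $\widehat\mu$ of support $K=\poly(L,n,\log(1/\delta))$ with $\widehat\mu(H)\le\frac{1+5\eps}{1-\omega}z_\cF^*$ and $(1-\delta/2)$-feasibility (on the rounded set), I would finally feed $\widehat\mu$, viewed as a weighted point set, into Matou\v{s}ek's deterministic $\epsilon$-net algorithm (Theorem~\ref{det-net}) with $\epsilon:=1/\widehat\mu(H)$. Lemma~\ref{l111} and Corollary~\ref{cor1} then deliver an integral set of size $O(d\cdot z_\cF^*\log(d\cdot z_\cF^*))=O(z_\cF^*\log z_\cF^*)$ hitting every range of $\widetilde{\cR}_{t_f}$, and by Claim~\ref{cl3} this set guards at least $(1-\delta)$-fraction of $\area(H)$. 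The overall running time and the bit complexity of the returned guards inherit the $\poly(L,n,\log(1/\delta))$ bound from the grid $\Gamma$ and from Renegar's precision, yielding the stated corollary.
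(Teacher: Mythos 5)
Your proposal follows the paper's own proof essentially step for step: the same verification of (A1)--(A3), the same deterministic maximization oracle built from the cell decomposition of $H$, the rational-function expression \raf{vis-wt2}, Renegar's quantifier elimination with the Lipschitz bound of Claim~\ref{cl4}, the same grid-snapping of large cells with Claim~\ref{cl3} to freeze the bit length across iterations, and the same final rounding via Theorem~\ref{det-net} through Lemma~\ref{l111} and Corollary~\ref{cor1}. The only slip is a wording inversion at the end: the $\epsilon$-net hits the ranges \emph{outside} the final active set $\widetilde{\cR}_{t_f}$ (those with $\widehat\mu(R)\ge 1$), not the ranges of $\widetilde{\cR}_{t_f}$ itself, and the unguarded area is then the final active set plus the snapped-away parts, each at most $(\delta/2)\area(H)$ — which is the conclusion you correctly state.
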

If $H$ is not simple, we get a result similar to Corollary~\ref{cor3} but with a quasi-polynomial running time $\poly(L,n^{O(\log h)},\log\frac{1}{\delta})$ (due to the complexity of the deterministic $\epsilon$-net finder).  
\begin{remark}\label{r3}
	It is worth noting that one can also obtain a \emph{ randomized} approximation algorithm with the same guarantee of Corollary~\ref{cor4} from the results in \cite{BM16}, by first randomly perturbing the polygon $H$ into a new polygon $H'$ such that $H'\subseteq H$ and $\area(H\setminus H') \le\delta$. Such a perturbation can be done using the rounding idea described above and guarantees with high probability that (AG2) is satisfied. Thus, we can apply the result in \cite{BM16} on $H'$.   
\end{remark} 
\subsubsection{Perimeter guards} 
In this case, we have $Q\leftrightarrow G=\partial H$ and $\cR\leftrightarrow T= H$. This is similar to the point guarding case with the exception that, in the maximization oracle, the point $q$ in \raf{vis-wt2} is selected from a line segment on $\partial H$. Also, by \cite{KK11}, the range space in this case admits an $\epsilon$-net of size $O(\frac{1}{\epsilon}\log\log \frac{1}{\epsilon})$. Thus we get the following result.

\begin{corollary}\label{cor4}
	Given a simple polygon $H$ with $n$ vertices with rational representation of maximum bit-length $L$ and $\delta>0$, there is a deterministic algorithm that finds in $\poly(L,n,h,\log\frac{1}{\delta})$ time a set of points in $\partial H$ of size $(z_\cF^*\log\log z_\cF^*)$ and bit complexity $\poly(L,n,h,\log\frac{1}{\delta})$ guarding at least $(1-\delta)$ of the area of $H$, where $z_\cF^*$ is the value of the optimal fractional solution. 
\end{corollary}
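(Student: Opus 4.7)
The plan is to mirror the proof of Corollary~\ref{cor3}, substituting the improved $\epsilon$-net bound of Kirkpatrick and King~\cite{KK11} and adapting the deterministic maximization oracle to the one-dimensional domain $\partial H$. Set $Q=\partial H$ and $\cR=\{V_H(q):q\in H\}$ with $w_0$ the area measure on $H$. Assumptions (A1)-(A3) are verified exactly as in Section~\ref{sec:pg}: the VC-dimension bound of \cite{V98,GK14} yields $g_{\cF}(r)\le r^{O(1)}$; the subsystem oracle is implemented via the arrangement of visibility polygons intersected with $\partial H$; an integral optimum bounded by the number of vertices exists; and $w_0$ is finite. Plugging the bound $s_{\cF}(\tfrac{1}{\epsilon})=O(\tfrac{1}{\epsilon}\log\log\tfrac{1}{\epsilon})$ into Theorem~\ref{t-main2} yields, once (A4) is established, the claimed hitting set of size $O(z_{\cF}^*\log\log z_{\cF}^*)$.

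The bulk of the work is implementing the deterministic maximization oracle $\MO(\cF_t,w_t,\omega)$ over $q\in\partial H$. I would reuse the partition $\{Q_1,\ldots,Q_l\}$ of $H$ induced by the arrangement of lines through the vertex set $\cV$ (vertices of $H$ together with vertices of $\cells(\cR_t')$) as in Section~\ref{sec:pg}, but restrict attention to its intersection with $\partial H$. This restriction breaks $\partial H$ into $O(r_t)$ open sub-segments on each of which the combinatorial structure of $V_H(q)\cap P$ is fixed for every $P\in\cells(\cR_t')$. Parametrizing such a sub-segment by a single variable $\lambda\in[0,1]$, expression \raf{vis-wt2} collapses to the univariate rational function
\[
\xi_t(\lambda)=\sum_{i=1}^{k}(1-\eps)^{j_i}\frac{M_i(\lambda)}{N_i(\lambda)},
\]
with $M_i,N_i$ polynomials of degree $O(1)$ and coefficients of bit length $O(L')$, where $L'$ is the maximum bit length of vertices of $\cV$. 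Maximizing $\xi_t(\lambda)$ on $[0,1]$ reduces (after checking endpoints) to rooting the derivative polynomial, which Renegar's quantifier elimination algorithm \cite{R92} handles in time and bit complexity $\poly(L',k,\log\tfrac{1}{\tau})$ to any additive accuracy $\tau>0$.

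To transfer additive accuracy to a multiplicative $(1-\omega)$ guarantee, I would verify a univariate analogue of Claim~\ref{cl4}: $\xi_t(\lambda)$ is $2^{O(kL')}$-Lipschitz on $[0,1]$, since its denominator is bounded below by $2^{-O(kL')}$ by the same positivity argument on strictly positive affine factors. Setting $\tau=\eps\cdot 2^{-O(kL')}$ with $\eps=\omega\cdot w_t(\cR_t)/\OPT_{\cF}$ and taking the best $\lambda$ over all sub-segments delivers the required approximate maximizer. Then I would invoke verbatim the grid-rounding step from Section~\ref{sec:pg}: replace $\cR_t$ by $\widetilde{\cR}_t$ obtained by snapping vertices of large cells to a grid $\Gamma$ of cell size $\rho=\Theta(\delta\cdot\area(H)/(Dt_{\max}r_{\max}))$, losing at most $\tfrac{\delta}{2}$-fraction of the area in aggregate by Claim~\ref{cl3}, and replacing $\delta$ by $\delta/2$ in the analysis. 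This keeps $L'=L\cdot\polylog(n,h,\tfrac{1}{\delta})$ throughout the $t_{\max}=\poly(n,\log\tfrac{1}{\delta})$ iterations.

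The main potential obstacle is the univariate Lipschitz bound underlying the selection of $\tau$: one must rule out the denominator of $\xi_t(\lambda)$ approaching zero anywhere on $[0,1]$, including near the endpoints, despite the fact that $q$ lies on $\partial H$ rather than in the interior. This is handled by the same observation that each affine factor $b(\lambda)=A\lambda+C$ attains its minimum over $[0,1]$ at an endpoint with value at least $2^{-O(L')}$, since the sub-segments are chosen so that no relevant line through two vertices of $\cV$ passes through the interior of the sub-segment. Once this is checked, combining the maximization oracle with Theorem~\ref{t-main2} and the $\epsilon$-net finder of \cite{KK11} yields the stated size and bit-complexity bounds.
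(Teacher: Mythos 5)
Your proposal is correct and follows essentially the same route as the paper: the paper's own proof of this corollary is just the observation that the point-guard argument carries over verbatim, with the maximization oracle's variable $q$ restricted to a segment of $\partial H$ (turning \raf{vis-wt2} into a univariate rational optimization handled by the same Renegar/Lipschitz/grid-rounding machinery) and with the $O(\frac{1}{\epsilon}\log\log\frac{1}{\epsilon})$ perimeter-guard $\epsilon$-net of \cite{KK11} substituted into Theorem~\ref{t-main2}. Your write-up simply supplies these details explicitly, so there is nothing to correct.
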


\subsection{Covering a polygonal region by translates of a convex polygon}\label{sec:cover-polygon}

Let $\cH$ be a collection of (non-simple) polygons in the plane and $H_0$ be a given {\it full-dimensional convex} polygon. The problem is to minimally cover all the points of the polygons in $\cH$ by translates of $H_0$, that is to find the minimum number of translates $H_0^1,\ldots,H_0^k$ of $H_0$ such that each point $p\in\bigcup_{H\in\cH}H$ is contained in some $H_0^i$. The discrete case when $\cH$ is a set of points has been considered extensively, e.g., covering points with unit disks/squares \cite{HM85} and generalizations in 3D \cite{ClarksonV06,Laue08}. 
Fewer results are known for the continuous case, e.g., \cite{G16} which considers the covering of simple polygons by translates of a rectangle\footnote{Note that in \cite{G16}, each polygon has to be covered {\it completely} by a rectangle.} and only provides an exact (exponential-time) algorithm; see also \cite{G11} for another example, where it is required to hit every polygon in $\cH$ by a copy of $H_0$ (but with rotations allowed).
  
This problem can be modeled as a hitting set problem in a range space $\cF=(Q,\cR)$, where $Q$ is the set of translates of $H_0$ and $\cR:=\left\{\{H_0^i\in Q:~R\in H_0^i\}:~R\in\bigcup_{H\in\cP}H\right\}$. Again considering $\cR$ as a multi-set, we have $\cR\leftrightarrow \bigcup_{H\in\cH}H$, and we shall refer to elements of $\cR$ as sets of translates of $H_0$ as well as points in $\bigcup_{H\in\cH}H$. It was shown by Pach and Woeginger \cite{PW90} that $\VCd(\cF^*)\le3$ and also that $\cF^*$ admits an $\epsilon$-net of size $s_{\cF^*}=O(\frac{1}{\epsilon})$. As observed in \cite{Laue08}, this would also imply that $\VCd(\cF)\le3$ and $s_{\cF}=O(\frac{1}{\epsilon})$. Thus (A1) is satisfied with $\gamma=3$; also we can show that (A1$'$) is satisfied as follows. Let $m$ be the total number of vertices of the polygons in $\cH$ and $H_0$. 
Given a finite subset $P\subseteq Q$ of translates of $H_0$, we can find (e.g. by a sweep line algorithm) in $O(m\log m)$ time the cells of the arrangement defined by $\cH\cup P$ (where a cell is naturally defined to be a maximal set of points in $\cR$ that all belong exactly to the same polygons in the arrangement). Let us cal this set $\cells(\cR)$ and note that it has size $O(m)$. Note also that every cell $\cR'\in\cells(\cR)$ is labeled by the subset $S(\cR')$ of $P$ that contains it, and $\cR|_P$ is the set of different labels.

Assume that $\cH$ is contained in a box of size $D$ and that $H_0$ contains a box of size $d$; then (A2) is satisfied as $\OPT_{\cF}\le\frac{D}{d}$. (A3) is satisfied if we use $w_0\equiv 1$ to be the area measure over $\cR$. Now we show that (A4) is also satisfied. 
 
Consider the randomized implementation of the maximization oracle in Section~\ref{sec:max}. We need to show that the oracles $\SO(\cF^*,\cR')$, $\PO(\cF,\cR')$ and $\Sample(\cF,w)$ can be implemented in polynomial time.
Note that for a given finite $\cR'\subseteq\cR$, the set $Q_{\cR'}$ is the set of all subsets of points in $\cR'$ that are contained in the same copy of $H_0$. Observe that each such subset is determined by at most two points from $\cR'$ that lie on the boundary of a copy of $H_0$. It follows that $\SO(\cF^*,\cR')$ can be implemented in $O((m|\cR'|)^2)$ time.  This argument also shows that $\PO(\cF,\cR')$ can be implemented in the time $\O((m|\cR'|)^2)$. Finally, we can implement $\Sample(\cF,\widehat w_t)$ given the probability measure $\widehat w_t:\cR\to\RR_+$ defined by the subset $P_t\subseteq Q$ as follows. We construct the cell arrangement $\cells(\cR)$, induced by $P=P_t$ as described above. 
We first sample $\cR'$ with probability $\frac{\widehat w_t(\cR')}{\sum_{\cR'\in\cells(\cR)}\widehat w_t(\cR')w_0(\cR')}$, then we sample a point $R$ uniformly at random from $\cR'$.  

\begin{corollary}\label{cor5}
Given a collection of polygons in the plane $\cH$ be and a (full-dimensional) convex polygon $H_0$, with $m$ total vertices respectively and $\delta>0$, there is a randomized algorithm that finds in $\poly(n,m,\log\frac{1}{\delta})$ time a set of $O(z_\cF^*)$ translates of $H_0$ covering at least $(1-\delta)$ of the total area of the polygons in $\cH$, where $z_\cF^*$ is the value of the optimal fractional solution. 
\end{corollary}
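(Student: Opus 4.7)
The plan is to apply the Main Theorem~\ref{t-main2} by verifying that the range space $\cF = (Q, \cR)$ defined at the start of Section~\ref{sec:cover-polygon} satisfies assumptions (A1)--(A4), and then to plug in the $\epsilon$-net size bound $s_\cF(1/\epsilon) = O(1/\epsilon)$ of Pach--Woeginger so that $s_\cF(z_\cF^*) = O(z_\cF^*)$. Most of the ingredients already appear in the discussion preceding the corollary, so my proposal is to assemble them and carefully track running time.

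First I would verify (A1)--(A3). Since $\VCd(\cF) \le 3$ by \cite{PW90,Laue08}, Lemma~\ref{VC} gives $g_\cF(r) \le r^3$, so (A1) holds with $\gamma = 3$. Assumption (A1$'$) follows by constructing, in $O(m|P|\log(m|P|))$ time via a sweep, the cell arrangement induced by $\cH \cup P$ and reading off the labels; there are $O((m|P|)^2)$ cells so $|\cR|_P| = O((m|P|)^2)$, consistent with (A1). Assumption (A2) holds because a regular grid of translates of $H_0$ covers any box of size $D$ containing $\cH$, giving $\OPT_\cF \le D/d$. Assumption (A3) is immediate by taking $w_0$ to be Lebesgue area on $\bigcup_{H \in \cH} H$.

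The main obstacle is verifying (A4), namely implementing the randomized maximization oracle from Section~\ref{sec:max} in polynomial time. This reduces to implementing the three auxiliary oracles efficiently. For $\SO(\cF^*, \cR')$: any maximal subset of points of $\cR'$ simultaneously contained in some translate of $H_0$ is determined combinatorially by at most two ``defining'' points on the boundary of that translate, so enumerating candidate pairs yields all elements of $Q_{\cR'}$ in $O((m|\cR'|)^2)$ time. For $\PO(\cF, \cR')$: given such a subset, an explicit translate containing it can be produced from the same pair of defining points, again in polynomial time. For $\Sample(\cF, \widehat w)$: build the cell arrangement of $\bigcup_{H\in\cH} H$ induced by the current point set $P_t$; within each cell the weight $\widehat w$ is constant, so sample a cell proportionally to (weight $\times$ area) and then sample uniformly inside.

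Finally I would combine these ingredients through Corollary~\ref{cor-main}. The number of oracle calls is $K = \poly(\OPT_\cF, 1/\eps, \log(1/\delta))$, each call costs $\poly(m, K)$ by the bounds above (using that $g_{\cF^*}(r) = O(r^{2^{d+1}})$ with $d = 3$ and that the $\epsilon$-approximation sample has size $\poly(\OPT_\cF, \log(1/\delta))$), and the final rounding invokes a deterministic $\epsilon$-net finder on a weighted ground set of support $K$ which, thanks to $s_\cF(1/\epsilon) = O(1/\epsilon)$, produces $O(z_\cF^*)$ translates. The only subtlety worth double-checking is that $D/d$ is polynomially bounded by the input description so that the $\OPT_\cF$ factors in $K$ remain polynomial; since $D$ and $d$ are derivable from the vertex coordinates of $\cH$ and $H_0$, this is immediate. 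Putting everything together yields the claimed $\poly(n,m,\log(1/\delta))$ running time and a set of $O(z_\cF^*)$ translates covering a $(1-\delta)$-fraction of the total area.
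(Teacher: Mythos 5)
Your proposal follows essentially the same route as the paper's own argument in Section~\ref{sec:cover-polygon}: verify (A1)--(A4) with the same ingredients (VC-dimension $\le 3$ via Pach--Woeginger/Laue for (A1), a sweep-line subsystem oracle for (A1$'$), the grid bound $\OPT_\cF\le D/d$ for (A2), the area measure for (A3), and for (A4) the identical implementations of $\SO$/$\PO$ via the at-most-two boundary-defining points and of $\Sample$ via the cell arrangement), and then conclude through the main theorem with the $O(\frac{1}{\epsilon})$-size $\epsilon$-net to get $O(z^*_\cF)$ translates. The only point to flag is your closing remark that the polynomial boundedness of $D/d$ is ``immediate'': it is not (the ratio need not be bounded by any polynomial in $n$ and $m$), but the paper glosses over the same issue, using $D/d$ only to establish the finiteness required by (A2), so this does not distinguish your argument from the paper's.
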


  
\subsection{Polyhedral separation in $\RR^d$}\label{sec:poly-sep}
Given two (full-dimensional) convex polytopes $\cP_1,\cP_2\subseteq \RR^d$ such that $\cP_1\subset \cP_2$, it is required to find a (separator) polytope $\cP_3\subseteq \RR^d$ such that $\cP_1\subseteq \cP_3\subseteq \cP_2$, with as few facets as possible. 
This problem can be modeled a hitting set problem in a range space $\cF=(Q,\cR)$, where $Q$ is the set of supporting hyperplanes for $P_1$ and $\cR:=\{\{p\in Q:~p\text{ separtaes $R$ from $\cP_1$}\}:~R\in\partial \cP_2\}$. Note that $\VCd(\cF)=d$ (and $\VCd(\cF^*)=d+1$). In their paper \cite{BG95}, Br\"{o}nnimann and Goodrich gave a deterministic $O(d^2\log\OPT_{\cF})$-approximation algorithm, improving on earlier results by Mitchell and Suri \cite{MS95}, and Clarkson \cite{C93}. It was shown in \cite{MS95} that, at the cost of losing a factor of $d$ in the approximation ratio, one can consider a finite set $Q$, consisting of the hyperplances passing through the facets of $\cP_1$. We can save this factor of $d$ by showing that $\cF$ satisfies (A1)-(A4). 

Let $n$ and $m$ be the number of facets of $\cP_1$ and $\cP_2$, respectively.  
Clearly (A1) is satisfied with $\gamma=d$, and given a finite set of hyperplanes $P\subseteq Q$ we can find the projection $\cR|_P$ as follows.
We first construct the cells of the hyperplane arrangement of $P$, which has complexity $O(|P|^d)$, in time $O(|P|^{d+1})$; see, e.g., \cite{AF92,S99}. Next, we intersect every facet of $\cP_2$ with every cell in the arrangement. This allows us to identify the partition of $\partial \cP_2$ induced by the cell arrangement; let us call it $\cells(\cR)$ (recall that $\cR\leftrightarrow\partial \cP_2$). Every $\cR'\in\cells(\cR)$ can be identified with the subset $S(\cR')$ of $P$ that separates a point $R\in\cR'$ from $\cP_1$. Then $\cR|_P=\{S(\cR'):~\cR'\in\cells(\cR)\}$. The running time for this is $\poly(|P|^d,m^d)$. Also, (A2) is obviously satisfied since $\cP_3=\cP_2$ is a separator with $n$ facets. For (A3), we use the $w_0\equiv 1$ to be the {\it surface area} measure (i.e., $w_0(\cR')=\vol_{d-1}(\cR')$ for $\cR'\subseteq\cR$). Now we show that (A4) also holds.

Consider the randomized implementation of the maximization oracle in Section~\ref{sec:max}. We need to show that the oracles $\SO(\cF^*,\cR')$, $\PO(\cF,\cR')$ and $\Sample(\cF,w)$ can be implemented in polynomial time.
Note that for a given finite $\cR'\subseteq\cR$, the set $Q_{\cR'}$ has size at most $g(|\cR'|,d+1)$, and furthermore, for any hyperplane $q\in Q$, $\cR'[q]$ is the set of points in $\cR'$ separated from $\cP_1$ by $q$. Thus, $\cR'[q]$ is determined by exactly $d$ points chosen from $\cR'$ and the vertices of $\cP_1$. It follows that the set $Q_{\cR'}$ can be found (and hence $\SO(\cF^*,\cR')$ can be implemented) in time $\poly((n^{\frac{d}{2}}+|\cR'|)^d)$. This argument also shows that $\PO(\cF,\cR')$ can be implemented in the time $\poly((n^{\frac{d}{2}}+|\cR'|)^d)$. Finally, we can implement $\Sample(\cF,\widehat w_t)$ given the probability measure $\widehat w_t:\cR\to\RR_+$ defined by the subset $P_t\subseteq Q$ as follows. We construct the cell arrangement $\cells(\cR)$, induced by $P=P_t$ as described above.  We first sample
$\cR'$ with probability $\frac{\widehat w_t(\cR')}{\sum_{\cR'\in\cells(\cR)}\widehat w_t(\cR')w_0(\cR')}$, then we sample a point $R$ uniformly at random from $\cR'$ (Note that both volume computation and uniform sampling can be done in polynomial time in fixed dimension). 
\begin{corollary}\label{cor6}
Given two convex polytopes $\cP_1,\cP_2\subseteq \RR^d$ such that $\cP_1\subset \cP_2$, with $n$ and $m$ facets respectively and $\delta>0$, there is a randomized algorithm that finds in $\poly((nm)^d,\log\frac{1}{\delta})$ time a polytope $\cP_3$ with $O(z_\cF^*\cdot d\log z_\cF^*)$ facets separating $\cP_1$ from a subset of $\partial\cP_2$ of volume at least $(1-\delta)$ of the volume of $\partial\cP_2$, where $z_\cF^*$ is the value of the optimal fractional solution. 
\end{corollary}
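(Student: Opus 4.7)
The plan is to cast polyhedral separation as \textsc{Min-Hitting-Set} on the range space $\cF=(Q,\cR)$ described immediately before the statement, verify (A1)--(A4), invoke Corollary~\ref{cor-main}, and translate its output. Any hitting set $H\subseteq Q$ induces the separator $\cP_3:=\bigcap_{h\in H}h^+$ (where $h^+$ is the closed half-space bounded by $h$ that contains $\cP_1$), which has at most $|H|$ facets; missing a set of ranges of $w_0$-measure at most $\delta\cdot w_0(\cR)$ translates directly into separating $\cP_1$ from a subset of $\partial\cP_2$ of $(d{-}1)$-volume at least $(1-\delta)\vol_{d-1}(\partial\cP_2)$.

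I would verify the assumptions as follows. Since $\VCd(\cF)=d$, Lemma~\ref{VC} gives (A1) with $\gamma=d$, and (A1$'$) is the cell-arrangement construction sketched in the paragraph just before the statement. For (A2) I would observe that the $n$ facet-defining hyperplanes of $\cP_1$ are supporting hyperplanes of $\cP_1$ and together separate every point of $\partial\cP_2$ from $\cP_1$ (using $\cP_1\subset\cP_2$ strictly), giving $\OPT_\cF\le n$. For (A3) I take $w_0$ to be the $(d{-}1)$-dimensional surface-area measure on $\partial\cP_2$. The main work is (A4), for which I would use the randomised implementation of Section~\ref{sec:max}: the oracles $\SO(\cF^*,\cdot)$ and $\PO(\cF,\cdot)$ are obtained by observing that each element of $Q_{\cR'}$ is cut out of $\cR'$ by a supporting hyperplane of $\cP_1$ and is therefore determined by at most $d$ points from $\cR'$ together with at most $d$ vertices of $\cP_1$, so the combinatorial types can be enumerated (and each realised by a small linear program) in $\poly((n^{d/2}+|\cR'|)^d)$ time; and $\Sample(\cF,\widehat w_t)$ is implemented by overlaying the current arrangement of $P_t$ with $\partial\cP_2$, weighting each resulting cell by $\widehat w_t$ times its $(d{-}1)$-volume, sampling a cell proportionally, and then sampling a point uniformly from it, which is polynomial-time in fixed dimension.

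With the assumptions verified, applying Corollary~\ref{cor-main} with $\eps$ a small constant gives, with constant probability, a hitting set of the claimed size $O(d\cdot z_\cF^*\log z_\cF^*)$. For the running time, I substitute into Corollary~\ref{cor-main} the bound $\OPT_\cF\le n$, the stronger bound $g_{\cF^*}(r)=O(r^{d+1})$ coming from $\VCd(\cF^*)=d+1$ (as noted in the paragraph before the statement), and an extra factor of $m^{O(d)}$ for the arrangement overlays with the $m$ facets of $\partial\cP_2$; this collapses everything to $\poly((nm)^d,\log(1/\delta))$. The main obstacle I foresee is (A4), specifically that $Q$ is infinite; the key point making it work is that the $\epsilon$-approximation step inside the maximisation oracle reduces the continuous search over $Q$ to a polynomial-size combinatorial enumeration that $\PO$ then realises as an explicit hyperplane.
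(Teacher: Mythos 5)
Your proposal is correct and follows essentially the same route as the paper: model the problem on the range space of supporting hyperplanes of $\cP_1$ versus points of $\partial\cP_2$, verify (A1)--(A4) (with $\gamma=d$, the arrangement-based subsystem oracle, $\OPT_\cF\le n$, the surface-area measure, and the Section~\ref{sec:max} implementation of \MO\ via $\SO(\cF^*,\cdot)$, $\PO$, and \Sample\ in time $\poly((n^{d/2}+|\cR'|)^d)$), and then invoke Corollary~\ref{cor-main}. The only differences are cosmetic (you make the hitting-set-to-$\cP_3$ translation and the $\OPT_\cF\le n$ bound via the facet hyperplanes of $\cP_1$ explicit, where the paper is terser), so no gap.
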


Note that the results in corollaries~\ref{cor5} and~\ref{cor6} assume the unit-cost model of computation and infinite precision arithmetic. We believe that deterministic algorithms for the maximization oracle in the bit-model can also be obtained using similar techniques as in Section \ref{sec:gallery}. We leave the details for the interested reader. 

\paragraph{Acknowledgement.}~The author is grateful to Waleed Najy for his help in the proof of Lemmas~\ref{l-bd2} and~\ref{l-max} and for many useful discussions.

\appendix

\section{An extension of the Br\"{o}nnimann-Goodrich algorithm for continuous range spaces}\label{sec:BG}
In addition to (A1), we will make the following assumption in this section:
\begin{itemize}
	\item[(A3$'$)] There exists a finite measure $\mu_0:P\to\RR_+$ such that the ranges in $\cR$ are $\mu_0$-measurable. 
\end{itemize}

\setlength{\algomargin}{.25in}
\begin{algorithm}[H]
	\label{BG-alg}
	\SetAlgoLined
	\KwData{A range space $\cF=(Q,\cR)$ satsfying (A1) and (A3$'$), and an approximation accuracy $\epsilon\in(0,1)$.}
	\KwResult{A hitting set for $\cR$. }
	
	$t\gets 0$\\
	\Repeat{$P$ is a hitting set for $\cR$}{
		define the probability measure $\widehat\mu_t:Q\to\RR_+$ by $\widehat\mu_t(q)\gets\frac{\mu_t(q)}{\mu_t(Q)}$, for $q\in Q$\\
		Find an $\epsilon$-net $P$ for $\cR$ w.r.t. the probability measure $\widehat\mu_t$ \label{BG-s1}\\
		\If{there is a range $R_{t+1}\in\cR$  such that $R_{t+1}\cap P=\emptyset$}{\label{BG-s2}
			$\mu_{t+1}(q):=2\mu_t(q)$ for all $q\in R_{t+1}$}
		$t \gets t+1$\\
	}
	\Return $P$	
	\caption{The Br\"{o}nnimann-Goodrich hitting set algorithm}
\end{algorithm}
For $\cR':=\{R_1,\ldots,R_t\}\subseteq\cR$, let $\cells(\cR):=\{\bigcap_{t'\in S}R_{t'}\setminus\bigcap_{t'\in [t]\setminus S}R_{t'}:~S\subseteq [t]\}\setminus\{\emptyset\}$ be the partition of $Q$ induced by $\cR'$. Define 
\begin{align}\label{delta0}
\delta_0:=\min_{\text{ finite } \cR'\subseteq\cR}\min_{P\in\cells(\cR')}\frac{\mu_0(P)}{\mu_0(Q)}.
\end{align}
\begin{theorem}\label{BG}
	Let $\cF=(Q,\cR)$ be a range space satisfying (A1) and (A3$'$) and admitting an $\epsilon$-net of size $s_{\cF}(\frac{1}{\epsilon})$, and $\mu:Q\to\RR_+$ be a measure feasible for \raf{FH}.
	For $\epsilon\le\frac{1}{2\mu(Q)}$, Algorithm~\ref{BG-alg} finds a hitting set  of size $s_{\cF}(\frac{1}{\epsilon})$ in $O(\mu(Q)\log\frac{1}{\delta_0})$ iterations.	
\end{theorem}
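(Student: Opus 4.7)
The plan is to adapt the Br\"onnimann--Goodrich (BG) potential argument to the continuous setting, replacing the point-doubling lower bound by a cell-based one that uses the fractional witness $\mu$. I would write $\mu_t=\rho_t\cdot\mu_0$ with $\rho_t(q)=2^{k_t(q)}$ and $k_t(q):=|\{t'\le t:\,q\in R_{t'}\}|$; the key observation is that $\rho_t$ is constant on every cell of the partition $\cells(\{R_1,\ldots,R_t\})$ induced by the first $t$ selected ranges.

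The upper bound on the total measure is the usual BG step. In iteration $t$, Step~\ref{BG-s1} returns an $\epsilon$-net $P$ with respect to $\widehat\mu_t=\mu_t/\mu_t(Q)$, so if $R_{t+1}$ is uncovered then $\mu_t(R_{t+1})<\epsilon\,\mu_t(Q)$. The doubling step therefore gives $\mu_{t+1}(Q)=\mu_t(Q)+\mu_t(R_{t+1})<(1+\epsilon)\mu_t(Q)$, and hence $\mu_t(Q)\le e^{\epsilon t}\mu_0(Q)$ by induction.

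For the lower bound I would use the feasibility of $\mu$ and Fubini: summing $\mu(R_{t'})\ge 1$ for $t'=1,\ldots,t$,
\begin{align*}
\int_Q \mu(q)\,k_t(q)\,dq \;=\; \sum_{t'=1}^{t}\mu(R_{t'})\;\ge\;t.
\end{align*}
Dividing by $\mu(Q)$, the $(\mu/\mu(Q))$-average of the integer-valued function $k_t$ is at least $t/\mu(Q)$, and since $k_t$ takes only finitely many values this maximum is attained at some $q^*\in Q$ with $k_t(q^*)\ge t/\mu(Q)$. The cell $C^*\in\cells(\{R_1,\ldots,R_t\})$ containing $q^*$ satisfies $\mu_0(C^*)\ge\delta_0\,\mu_0(Q)$ by the definition of $\delta_0$, and $\rho_t\equiv 2^{k_t(q^*)}$ on $C^*$, so
\begin{align*}
\mu_t(Q)\;\ge\;\mu_t(C^*)\;\ge\;2^{t/\mu(Q)}\,\delta_0\,\mu_0(Q).
\end{align*}
Combining with the upper bound yields $t\left(\frac{\ln 2}{\mu(Q)}-\epsilon\right)\le\ln(1/\delta_0)$, and for $\epsilon\le 1/(2\mu(Q))$ the coefficient on the left is at least $(\ln 2-\tfrac12)/\mu(Q)$, whence $t=O(\mu(Q)\log(1/\delta_0))$. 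When the loop terminates, $P$ is an $\epsilon$-net hitting every range, and hence a hitting set of the promised size $s_\cF(1/\epsilon)$.

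The main obstacle is the lower bound: in the finite BG argument one exploits the fact that some point of an integral optimum is doubled at least $t/\OPT$ times, but individual points carry no $\mu_0$-mass here. The remedy is two-pronged---averaging against the fractional $\mu$ (which simultaneously sharpens the bound from $\OPT$ to $\mu(Q)$, in the spirit of \cite{ERS05}) and using the cell partition together with $\delta_0$ to manufacture a positive-measure ``proxy point'' on which $\rho_t$ has blown up. The argument therefore critically relies on assumption (A3$'$) to guarantee $\delta_0>0$.
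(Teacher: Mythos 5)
Your proof is correct and follows essentially the same route as the paper: both use the cell partition induced by the doubled ranges, the feasibility of $\mu$ to exhibit a cell whose doubling count is at least $t/\mu(Q)$, the bound $\mu_0(\text{cell})\ge\delta_0\,\mu_0(Q)$, and the $(1+\epsilon)$ per-iteration growth of $\mu_t(Q)$ to conclude $t=O(\mu(Q)\log\frac{1}{\delta_0})$. The only cosmetic difference is that you extract the heavy cell by averaging the degrees directly, whereas the paper first applies convexity of the exponential before the same averaging step.
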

\begin{proof} Let $\cR_t=\{R_1,\ldots,R_t\}\subseteq \cR$ be the set of ranges whose weights are doubled in iterations $1,\ldots,t$. For $P\subseteq Q$, define $\deg_t(P):=|\{R\in\cR_t:~R\supseteq P\}|$.
	For two measures $\mu',\mu'':Q\to\RR_+$, denote by $\langle \mu',\mu''\rangle$ the inner product: 
	$\langle \mu',\mu''\rangle:=\int_{q\in Q}\mu'(q)\mu''(q)dq$.
	Then 
	\begin{equation}\label{e1-1}
	\langle \mu_t,\mu\rangle=\sum_{P\in\cells(\cR_t)}2^{\deg_t(P)}\mu(P).
	\end{equation}
	By the feasibility of $\mu$, for every $R_{t'} \in \cR_{t}$, we have that $\mu(R_{t'})=\int_{q\in Q}\mu(q)\bone_{q\in R_{t'}}dq \geq 1$. Thus,
	\begin{align}\label{e2-2}
	t&\le \sum_{t'=1}^t\mu(R_{t'})=\sum_{t'=1}^t\int_{q\in Q} \mu(q)\bone_{q\in R_{t'}}dq\nonumber\\
	& = \int_{q \in Q} \sum_{t'=1}^t\mu(q)\bone_{q\in R_{t'}}dq=\int_{q \in Q} \mu(q) \deg_t(q)dq\nonumber\\
	&=\sum_{P\in\cells(\cR_t)}\deg_t(P)\mu(P). 
	\end{align}
	From \raf{e1-1} and \raf{e2-2}, we obtain
	\begin{align}\label{e3-3}
	\frac{\langle \mu_t,\mu\rangle}{\mu(Q)}&=\sum_{P\in\cells(\cR_t)}2^{\deg_t(P)}\frac{\mu(P)}{\mu(Q)}
	\ge 2^{\sum_{P\in\cells(\cR_t)}\deg_t(P)\frac{\mu(P)}{\mu(Q)}}\ge 2^{t/\mu(Q)},
	\end{align}
	where the first inequality follows by the convexity of the exponential function while the second follows from \raf{e2-2}. 
	Since the range $R_{t+1}$ chosen in step~\ref{BG-s2} does not intersect the $\epsilon$-net $P$ chosen in step~\ref{BG-s1}, we have $\mu_t(R_{t+1})<\epsilon \mu_t(Q)$ and thus $\mu_{t+1}(Q)=\mu_t(Q)+\mu_t(R_{t+1})<(1+\epsilon)\mu_t(Q).$ 
	It follows that 
	\begin{align}\label{e4-4} 
	(1+\epsilon)^t>\frac{\mu_t(Q)}{\mu_0(Q)}&=\sum_{P\in\cells(\cR_t)}2^{\deg_t(P)}\frac{\mu_0(P)}{\mu_0(Q)}.
	\end{align}
	From \raf{e3-3}, we get that there is a $P\in\cells(\cR_t)$ such that $2^{\deg_t(P)}\ge 2^{t/\mu(Q)}$. On the other hand, \raf{e4-4} implies that $2^{\deg_t(P)}\frac{\mu_0(P)}{\mu_0(Q)}<(1+\epsilon)^t<e^{\epsilon t}$. Putting the two inequalities together, we obtain
	\begin{align}\label{e5-5}
	\frac{t\ln 2}{\mu(Q)}\le \epsilon t+\ln\frac{\mu_0(Q)}{\mu_0(P)}.
	\end{align}
	Since $\epsilon\le\frac{1}{2\mu(Q)}$, we get from \raf{e5-5} that $t\le \frac{1}{\ln 2-0.5}\cdot\mu(Q)\ln\frac{\mu_0(Q)}{\mu_0(P)}.$
\end{proof}
Let $\mu^*$ be a $(1+\eps)$-approximate solution for \raf{FH}. 
We can use Algorithm~\ref{BG-alg} in a binary search manner to determine whether or not $\mu^*(Q)\le(1+\rho)^i$, for any $\rho>0$ and $i\in\ZZ_+$, by checking if the algorithm stops with a hitting set in $\frac{1}{\ln 2-0.5}\cdot(1+\rho)^i\log\frac{1}{\delta_0}$ iterations. 
As $1\le \mu^*(Q)\le n$ if we assume (A2), we need only $O(\log_{1+\rho} n)$ binary search steps. 

We mention an application of Theorem~\ref{BG} when $Q$ is finite. Let $\mu_0\equiv 1$. Then $\delta_0\ge \frac{1}{|Q|}$, and Theorem~\ref{BG} implies that a hitting set of size $s_{\cF}(O(\mu^*(Q)))$ can be found in $O(\mu^*(Q)\log n\log |Q|)$ iterations.

\begin{remark}\label{r1}
One can also extend the second algorithm and analysis suggested in \cite{AP14} to the infinite case, to get as randomized algorithm that computes, with probability at least $\frac{1}{7}$ a hitting set of size $s_{\cF}(8z_{\cF}^*)$ in $O(z_{\cF}^*\ln(\frac{1}{\delta_0(\delta_0')^2}))$, where $\delta_0$ is as defined in \raf{delta0}, and 
\begin{align}\label{delta0'}
\delta_0':=\min_{\text{ finite } P\subseteq Q}\frac{w_0(\cR[P])}{w_0(\cR)},
\end{align}
where $\cR[P]:=\{R\in\cR:~R\cap Q= P\}$.
\end{remark}

\end{document}